\tikzset{block/.style={draw, thick, text width=2cm , minimum height=1.3cm, align=center},   
line/.style={-latex}     
} 
\newcommand{\up}[2]{\ensuremath{{#1}|_{{#2}}}}
\newcommand{\Card}[1]{|#1|}
\newcommand{\var}[1]{\mathsf{Var}(#1)}
\newcommand{\copyop}[1]{\mathsf{Copy}(#1)}
\newcommand{\loopatoms}[1]{\mathsf{LA}(#1)}
\newcommand{\completion}[1]{\mathsf{Comp}(#1)}
\newcommand{\answer}[1]{\mathsf{AS}(#1)}
\newcommand{\copyatom}[1]{#1\textprime}
\newcommand{\staratom}[1]{#1^{\star}}
\newcommand{\copyvar}{\mathsf{CV}}
\newcommand{\at}[1]{\mathsf{at}(#1)}
\newcommand{\rules}[1]{#1}
\newcommand{\true}{\textit{true}\xspace}
\newcommand{\false}{\textit{false}\xspace}
\newcommand{\head}[1]{\mathsf{Head}(#1)}
\newcommand{\body}[1]{\mathsf{Body}(#1)}
\newcommand{\toolname}{\ensuremath{\mathsf{sharpASP}}-\ensuremath{\mathcal{SR}}\xspace}
\newcommand{\ganak}{GANAK}
\newcommand{\clingo}{clingo\xspace}
\newcommand{\dependency}[1]{\mathsf{DG}(#1)}
\newcommand{\funcname}[1]{\psi(#1)}
\newcommand{\coNP}{$\mathsf{co}$-$\mathsf{NP}$\xspace}
\newcommand{\pclass}{$\mathsf{P}$\xspace}
\newcommand{\npclass}{$\mathsf{NP}$\xspace}
\newcommand{\first}{\ensuremath{\phi_1}}
\newcommand{\second}{\ensuremath{\phi_2}}
\newcommand{\project}{\ensuremath{\mathcal{X}}}
\newtheorem{lemma}{Lemma}
\newtheorem{example}{Example}
\newtheorem{proposition}{Proposition}
\begin{document}

\lefttitle{Cambridge Author}

\jnlPage{1}{8}
\jnlDoiYr{2021}
\doival{10.1017/xxxxx}

\title[Theory and Practice of Logic Programming]{Counting Answer Sets of Disjunctive Answer Set Programs}

\begin{authgrp}
\author{Md Mohimenul Kabir}
\affiliation{National University of Singapore}
\author{Supratik Chakraborty}
\affiliation{Indian Institute of Technology Bombay}
\author{Kuldeep S Meel}
\affiliation{Georgia Institute of Technology}
\end{authgrp}
\history{\sub{xx xx xxxx;} \rev{xx xx xxxx;} \acc{xx xx xxxx}}

\maketitle

\begin{abstract}
Answer Set Programming (ASP) provides a powerful declarative paradigm for 
  knowledge representation and reasoning. Recently, counting answer sets 
  has emerged as an important computational problem with applications in 
  probabilistic reasoning, network reliability analysis, and other domains. 
  This has motivated significant research into designing efficient ASP 
  counters. While substantial progress has been made for normal logic 
  programs, the development of practical counters for disjunctive logic 
  programs  remains challenging.
  
  We present {\toolname}, a novel framework for counting answer sets of 
  disjunctive logic programs based on subtractive reduction to projected 
  propositional model counting. Our approach introduces an alternative 
  characterization of answer sets that enables efficient reduction while ensuring that intermediate representations remain of polynomial size. 
  This allows 
  {\toolname} to leverage recent advances in projected model counting 
  technology. Through extensive experimental evaluation on diverse 
  benchmarks, we demonstrate that {\toolname} significantly outperforms 
  existing counters on instances with large answer set counts. Building on these results, 
  we develop a hybrid counting approach that combines 
  enumeration techniques with {\toolname} to achieve state-of-the-art 
  performance across the full spectrum of disjunctive programs.
\end{abstract}

\begin{keywords}
Answer Set Counting, Disjunctive Programs, Subtractive Reduction, Projected Model Counting
\end{keywords}
\section{Introduction}
Answer Set Programming (ASP) \citep{MT1999} has emerged as a powerful 
declarative problem-solving paradigm with applications across 
diverse application domains. These include decision support systems 
\citep{NBGWB2001}, systems biology \citep{GSTUV2008}, diagnosis and 
repair \citep{LR2015}. In 
the ASP paradigm, domain knowledge and queries are expressed through 
rules defined over propositional atoms, collectively forming an ASP 
program. Solutions manifest as answer sets - assignments to these 
atoms that satisfy program rules according to ASP semantics. Our work 
focuses on the fundamental challenge of answer set counting \#ASP: 
determining the total number of valid answer sets for a given ASP 
program.

Answer set counting shares conceptual similarities with propositional
model counting (\#SAT), in which we count satisfying assignments of
Boolean formulas \citep{valiant1979}. While \#SAT is
\#\pclass-complete \citep{valiant1979}, its practical significance has
driven substantial research, yielding practically efficient
propositional model counters that combine strong theoretical
guarantees with impressive empirical performance. This, in turn, has
motivated research in counting techniques beyond propositional logic.
Specifically, there has been growing interest in answer set counting,
spurred by applications in probabilistic reasoning \citep{LTW2017},
network reliability analysis \citep{KM2023}, answer set
navigation~\citep{FGR2022,RHGGF2024}, and others~\citep{KTPM2024,KM2024,KM2025}.

Early approaches to answer set counting relied primarily on 
exhaustive enumeration \citep{GKS2012}. Recent methods have made 
significant progress by leveraging \#SAT techniques 
\citep{EHK2024,KCM2024,ACMS2015,JN2011,Janhunen2006,FGHR2024}. 
Complementing these approaches, dynamic programming on tree
decompositions has shown promise for programs with bounded treewidth
\citep{FHMW2017,FH2019}.  Most existing answer set counters focus on
normal logic programs --- a restricted class of ASP. Research on
counters for the more expressive class of disjunctive logic programs
\citep{EG1995} has received relatively less attention over the
years. Our work attempts to bridge this gap by focusing on practically
efficient counters for disjunctive logic programs. Complexity
theoretic arguments show that barring a collapse of the polynomial
hierarchy, translation from disjunctive to normal programs must incur
exponential overhead \citep{EFTW2004,Zhou2014}.  Consequently,
counters optimized for normal programs cannot efficiently handle
disjunctive programs, unless the programs themselves have {\em special
  properties} \citep{JWWWZX2016,FS2015,BAFP2017}. While loop
formula-based translation \citep{LL2003} enables counting in theory,
the exponential overhead becomes practically prohibitive for programs
with many cyclic atom relationships \citep{LR2006}.  Similarly,
although disjunctive answer set counting can be reduced to QBF
counting in principle~\citep{EETW2000}, this doesn't yield a
practically scalable counter since QBF model counting still does not
scale as well in practice as propositional model
counting~\citep{SMKS2022,CLPS2024}.  This leads to our central
research question: {\em Can we develop a practical answer set counter
  for disjunctive logic programs that scales effectively to handle
  large answer set counts? }

Our work provides an affirmative answer to this question through
several key contributions. We present the design, implementation, and
extensive evaluation of a novel counter for disjunctive programs,
employing subtractive reduction \citep{DHK2005} to projected
propositional model counting~\citep{ACMS2015a}, while maintaining
polynomial formula size growth. The approach first computes an
over-approximation of the answer set count, and then subtracts the
surplus computed using projected counting. This yields a \#\npclass algorithm
that leverages recent advances in projected propositional counting
\citep{SRSM2019,LM2019}. This approach is theoretically justified:
answer set counting for normal programs is in \#\pclass
\citep{JN2011,EHK2021}, while for disjunctive programs, it lies in
\#~$\cdot$~co-\npclass \citep{FHMW2017}. Since \#~$\cdot$~co-\npclass
= \#~$\cdot$~\pclass$^{\mathsf{NP}} =$ \#\npclass
\citep{DHK2005,HV1995}, our reduction is complexity-theoretically
sound and yields a practical counting algorithm.

While subtractive reduction for answer set counting has been 
proposed earlier \citep{HK2023}, our work makes several novel 
contributions beyond the theoretical framework. We develop a complete 
implementation with careful algorithm design choices and provide 
comprehensive empirical evaluation across diverse benchmarks. A 
detailed comparison with the prior approach is presented in Section 
\ref{section:counting}.

Our counter, \toolname, employs an alternative definition of answer 
sets for disjunctive programs, extending earlier work on normal programs 
\citep{KCM2024}. This definition enables the use of off-the-shelf 
projected model counters without exponential formula growth. Extensive 
experiments on standard benchmarks demonstrate that \toolname 
significantly outperforms existing counters on instances with large 
answer set counts. This motivates our development of a hybrid 
counter combining enumeration and \toolname to consistently exceed 
state-of-the-art performance.

The remainder of the paper is organized as follows. Section
\ref{section:preliminaries} covers essential background. Section
\ref{section:related_works} reviews prior work. Section
\ref{section:definition} presents our alternative answer set
definition for disjunctive programs. Section \ref{section:counting}
details our counting technique \toolname. Section
\ref{section:experiment} provides experimental results, and Section
\ref{section:conclusion} concludes the paper with future research
directions.

\section{Preliminaries}
\label{section:preliminaries}
We now introduce some notation and preliminaries needed in subsequent sections.

\paragraph{Propositional Satisfiability.}
A propositional \emph{variable} $v$ takes value from the domain $\{0, 1\}$ ($\{\false, \true\}$ resp.).  A
\emph{literal} $\ell$ is either a variable or its
negation.

A \emph{clause} $C$ is a {\em disjunction} ($\vee$) of literals.  For
clarity, we often represent a clause as a set of
literals, implicitly meaning that all literals in the
set are disjoined in the clause. 
A \emph{unit clause} is a clause with a single literal.
The constraint represented by a clause $C \equiv
(\neg{x_1} \vee \ldots \vee \neg{x_k} \vee x_{k+1} \vee \ldots \vee
x_{k+m})$ can be expressed as a logical \textit{implication} as follows: $(x_1
\wedge \ldots \wedge x_k) \longrightarrow (x_{k+1} \vee \ldots \vee
x_{k+m})$, where the conjuction of literals $x_1 \wedge \ldots \wedge x_k$ is known as the {\em antecedent} and the disjunction of literals is known as the {\em consequent}.
If $k = 0$, the antecedent of the implication is
$\true$, and if $m = 0$, the consequent is $\false$.

A formula $\phi$ is said to be in \emph{conjunctive normal form (CNF)} if it is a conjuction ($\wedge$)
of clauses. For convenience of exposition, a CNF formula is often represented as a set of clauses, 
implicitly meaning that all clauses in the set are conjoined in the
formula. We denote the set of variables of a propositional formula $\phi$ as $\var{\phi}$.

An assignment over a set $X$ of propositional variables is a mapping
$\tau: X \rightarrow \{0,1\}$.  For a variable $x \in X$, we define
$\tau(\neg{x}) = 1 - \tau(x)$. 
An assignment $\tau$ over $\var{\phi}$ is called a {\em model} of $\phi$, represented as $\tau \models \phi$, if $\phi$ evaluates to \true under the assignment $\tau$, as per the semantics of propositional logic.
A formula $\phi$ is said to be {\em SAT} (resp. {\em UNSAT}) if there exists a model (resp. no model) of $\phi$.
Given an assignment $\tau$, we use the notation $\tau^{+}$ (resp. $\tau^{-}$) to denote the set of variables that are assigned $1$ or \true (resp. $0$ or \false).

Given a CNF formula $\phi$ (as a set of clauses) and an assignment
$\tau: X \rightarrow \{0,1\}$, where $X \subseteq \var{\phi}$, the
\textit{unit propagation} of $\tau$ on $\phi$, denoted
$\up{\phi}{\tau}$, is another CNF formula obtained by applying the
following steps recursively: (a) remove each clause $C$ from $\phi$
that contains a literal $\ell$ s.t. $\tau(\ell) = 1$, (b) remove from
each clause $C$ in $\phi$ all literals $\ell$ s.t. either $\tau(\ell)
= 0$ or there exists a \emph{unit clause} $\{\neg \ell\}$, i.e. a
clause with a single literal $\neg \ell$, and (c) apply the above
steps recursively to the resulting CNF formula until there are no
further syntactic changes to the formula.
As a special case, the unit propagation of an empty
formula is the empty formula.
It is not hard to show that unit propagation of $\tau$ on $\phi$
always terminates or reaches {\em fixed point}. We say that $\tau$ \emph{unit propagates} to
literal $\ell$ in $\phi$, if $\{\ell\}$ is a unit clause in
$\up{\phi}{\tau}$, i.e. if $\{\ell\} \in \up{\phi}{\tau}$.

Given a propositional formula $\phi$, we use $\#\phi$ to denote the
count of models of $\phi$. If $X \subseteq \var{\phi}$ is a set of
variables, then $\#\exists X \phi$ denotes the count of models of
$\phi$ after disregarding assignments to the variables in
$X$. In other words, two different models of $\phi$
that differ only in the assignment of variables in $X$ are counted as
one in $\#\exists X \phi$.

\paragraph{Answer Set Programming.}
An \textit{answer set program} $P$ consists of a set of rules, where
each rule is structured as follows:
\begin{align}
\label{eq:general_rule}
\text{Rule $r$:~~}a_1 \vee \ldots a_k \leftarrow b_1, \ldots, b_m, \textsf{not } c_1, \ldots, \textsf{not } c_n
\end{align}
where $a_1, \ldots, a_k, b_1, \ldots, b_m, c_1, \ldots, c_n$ are
propositional variables or \emph{atoms}, and $k,m,n$ are non-negative
integers.  The notations $\rules{P}$ and $\at{P}$ refer to the rules
and atoms of the program $P$, respectively.  In rule $r$ above, the
operator ``\textsf{not}'' denotes \textit{default
  negation}~\citep{clark1978}. For each such rule $r$,
we use the following notation: the set of
atoms $\{a_1, \ldots, a_k\}$ constitutes the {\em head} of $r$,
denoted by $\head{r}$, the set of atoms $\{b_1, \ldots, b_m\}$ is
referred to as the {\em positive body atoms} of $r$, denoted by
$\body{r}^+$, and the set of atoms $\{c_1, \ldots, c_n\}$ is referred
to as the \textit{negative body atoms} of $r$, denoted by
$\body{r}^-$.
We use $\body{r}$ to denote the set of literals $\{b_1, \ldots, b_m,
\neg{c_1}, \ldots, \neg{c_n}\}$.  For notational convenience, we
sometimes use $\bot$ on the left (resp. $\top$ on the right) of $\leftarrow$ in a
rule $r$ to denote that $\head{r}$ (resp. $\body{r}$) is empty. A
program $P$ is called a {\em disjunctive logic program} if $\exists r
\in \rules{P}$ such that $\Card{\head{r}} \geq 2$~\citep{BD1994};
otherwise, it is a {\em normal logic program}.  Our
focus in this paper is on disjunctive logic programs.

Following standard ASP semantics, an interpretation $M$ over the atoms
$\at{P}$ specifies which atoms are present in $M$, or equivalently
assigned \true in $M$.  Specifically, atom $a$ is \true in $M$ if and
only if $a \in M$.
An interpretation $M$ satisfies a rule $r$, denoted by $M \models r$, if and only if $(\head{r} \cup \body{r}^{-}) \cap M \neq \emptyset$ or $\body{r}^{+} \setminus M \neq \emptyset$. 
An interpretation $M$ is a {\em model} (though not necessarily an answer set) of $P$, denoted by $M \models P$, if $M$ satisfies every rule in $P$, i.e., $\forall_{r \in \rules{P}} M \models r$. 
The \textit{Gelfond-Lifschitz (GL) reduct} of a program $P$ with respect to an interpretation $M$ is defined as $P^M = \{\head{r} \leftarrow \body{r}^+ \mid r \in \rules{P}, \body{r}^- \cap M = \emptyset\}$ \citep{GL1991}. 
An interpretation $M$ is an {\em answer set} of $P$ if $M \models P$ and $\not \exists M\textprime \subset M$ such that $M\textprime \models P^M$. 
In general, an ASP $P$ may have multiple answer sets.  The notation $\answer{P}$ denotes the set of all answer sets of $P$.

\paragraph{Clark Completion.}
The {\em Clark Completion}~\citep{LL2003} translates an ASP program $P$ to a propositional formula $\completion{P}$. 
The formula $\completion{P}$ is defined as the conjunction of the following propositional implications:
\begin{enumerate}
    \item \label{clause:g1} (group $1$) for each atom $a \in \at{P}$ s.t. $\not \exists r \in \rules{P}$ and $a \in \head{r}$, add a unit clause $\neg{a}$ to $\completion{P}$
    \item \label{clause:g2} (group $2$) for each rule $r \in \rules{P}$, add the following implication to $\completion{P}$:
    \[
        \bigwedge_{\ell \in \body{r}} \ell \longrightarrow \bigvee_{x \in \head{r}} x
    \]
    \item \label{clause:g3} (group $3$) for each atom $a \in \at{P}$ occuring in the head of at least one of the rules of $P$, let $r_1, \ldots, r_k$ be precisely all rules containing $a$ in the head, and add the following implication to $\completion{P}$:
    \[
        a \longrightarrow \bigvee_{i \in [1,k]} (\bigwedge_{\ell \in \body{r_i}} \ell ~\wedge~ \bigwedge_{x \in \head{r_i} \setminus \{a\}} \neg{x})
    \]
\end{enumerate}
It is known that every answer set of $P$ satisfies $\completion{P}$,
although the converse is not necessarily true~\citep{LL2003}.

Given a program $P$, we define the \textit{positive dependency graph} $\dependency{P}$ of $P$ as follows. 
Each atom $x \in \at{P}$ corresponds to a vertex in $\dependency{P}$. 
For $x, y \in \at{P}$, there is an edge from $y$ to $x$ in $\dependency{P}$ if there exists a rule $r \in \rules{P}$ such that $x \in \body{r}^+$ and $y \in \head{r}$ \citep{KS1992}. 
A set of atoms $L \subseteq \at{P}$ forms a \textit{loop} in $P$ if, for every $x, y \in L$, there is a path from $x$ to $y$ in $\dependency{P}$, and all atoms (equivalently, nodes) on the path belong to $L$. 
An atom $x$ is called a \textit{loop atom} of $P$ if there is a loop $L$ in $\dependency{P}$ such that $x \in L$. 
We use the notation $\loopatoms{P}$ to denote the set of all loop atoms of the program $P$. 
If there is no loop in $P$, we call the program \textit{tight}; otherwise, it is said to be \textit{non-tight}~\citep{Fages94}.
\begin{example}
    Consider the program $P = \{r_1: p_0 \vee p_1 \leftarrow \top; \text{ }r_2: q_0 \vee q_1 \leftarrow \top; \text{ }r_3: q_0 \leftarrow w; \text{ }r_4: q_1 \leftarrow w;\text{ }
    r_5: w \leftarrow p_0; \text{ }r_6: w \leftarrow p_1, q_1; \text{ }r_7: \bot \leftarrow \mathsf{not }\text{ }w;\}$.
    
    The group~\ref{clause:g2} clauses in $\completion{P}$ are: $\{(p_0 \vee p_1), (q_0 \vee q_1), (\neg{w} \vee q_0), (\neg{w} \vee q_1), (\neg{p_0} \vee w), (\neg{p_1} \vee \neg{q_1} \vee w), (w)\}$;
    and the group~\ref{clause:g3} clauses are: $\{(p_0 \longrightarrow \neg{p_1}), (p_1 \longrightarrow \neg{p_0}), (q_0 \longrightarrow (\neg{q_1} \vee w)), (q_1 \longrightarrow (\neg{q_0} \vee w)), (w \longrightarrow (p_0 \vee (p_1 \wedge q_1)))\}$.
    
    Since each atom occurs in at least one rule's head, there are no group~\ref{clause:g1} clauses. Thus, $\completion{P}$ consists of only group~\ref{clause:g2} and group~\ref{clause:g3} clauses. 
    In this program, the set of loop atoms is $\{q_1, w\}$.
\end{example}

\paragraph{Subtractive Reduction.} Borrowing notation from~\citep{DHK2005}, suppose $\Sigma$ and $\Gamma$ are alphabets, and $Q_1, Q_2 \subseteq \Sigma^* \times \Gamma^*$ are binary relations such that for each $x \in \Sigma^*$, the sets $Q_1(x) = \{y \in \Gamma^* \mid Q_1(x,y)\}$ and $Q_2(x) = \{y \in \Gamma^* \mid Q_2(x,y)\}$ are finite. Let $\#Q_1$ and $\#Q_2$ denote counting problems that require us to find $\Card{Q_1(x)}$ and $\Card{Q_2(x)}$ respectively, for a given $x \in \Sigma^*$.
We say that $\#Q_1$ strongly reduces to $\#Q_2$ via a subtractive
reduction, if there exist polynomial-time computable functions $f$ and
$g$ such that for every string $x \in \Sigma^*$, the following hold:
(a) $Q_2(g(x)) \subseteq Q_2(f(x))$, and
(b) $\Card{Q_1(x)} = \Card{Q_2(f(x))} - \Card{Q_2(g(x))}$.  As we will
see in Section~\ref{section:counting}, in our context, $\#Q_1$ is the
answer set counting problem for disjunctive logic programs, and
$\#Q_2$ is the projected model counting problem for propositional
formulas.

\section{Related Work}
\label{section:related_works}

Answer set counting exhibits distinct complexity characteristics across 
different classes of logic programs. For normal logic programs, the 
problem is {\#P}-complete~\citep{valiant1979}, while for disjunctive 
logic programs, it rises to {\#}~$\cdot$~{\coNP}~\citep{FHMW2017}. 
This complexity gap between normal and disjunctive programs highlights 
that answer set counting for disjunctive logic programs is likely 
harder than that for normal logic programs, under standard complexity 
theoretic assumptions.

This complexity distinction is also reflected in the corresponding decision 
problems as well. While determining the existence of an  
answer set for normal logic programs is {\npclass}-complete~\citep{MT1991}, 
the same problem for disjunctive logic programs is 
$\Sigma_{2}^{p}$-complete~\citep{EG1995}. This fundamental difference 
in complexity has important implications for translations between 
program classes. Specifically, a polynomial-time translation from 
disjunctive to normal logic programs that preserves the count of 
answer sets does exist unless the polynomial hierarchy 
collapses~\citep{JNSSY2006,Zhou2014,JWWWZX2016}.

Much of the early research on answer set counting focused on normal 
logic programs~\citep{EHK2021,EHK2024,KCM2024,ACMS2015}. The methodologies for 
counting answer sets have evolved significantly over time. Initial 
approaches relied primarily on enumerations~\citep{GKS2012}. More recent methods have adopted 
advanced algorithmic techniques, particularly tree decomposition and 
dynamic programming.
~\cite{FHMW2017} developed 
DynASP, an exact answer set counter optimized for instances with small treewidth. 
\cite{KESHFM2022} 
explored a different direction with ApproxASP, which implements an approximate counter providing  
$(\varepsilon,\delta)$-guarantees, with the adaptation of hashing-based 
techniques.

Subtraction-based techniques have emerged as 
promising approaches for various counting problems, e.g., MUS 
counting~\citep{BM2021}. In the 
context of answer set counting, subtraction-based methods were 
introduced in~\citep{HK2023,FGHR2024}. These methods employ a two-phase 
strategy: initially overcounts the answer set count, subsequently subtracts the surplus to obtain the exact count. 
\cite{HK2023} developed a method utilizing 
projected model counting over propositional formulas with projection 
sets. A detailed comparison of our work with their approach is 
provided at the end of Section~\ref{section:counting}. In a 
different direction, \cite{FGHR2024} proposed 
iascar, specifically tailored for normal programs. Their approach 
iteratively refines the overcount count by enforcing {\em external 
support} for each loop and applying the {\em inclusion-exclusion principle}. 
The key distinction of iascar lies in its comprehensive consideration 
of external supports for all cycles in the counting process.

\section{An Alternative Definition of Answer Sets}
\label{section:definition}
In this section, we present an alternative definition of answer sets
for disjunctive logic programs, that generalizes the work
of~\citep{KCM2024} for normal logic programs. Before presenting the alternative definition of answer sets, 
we provide a definition of {\em justification}, that is crucial to understand our technical contribution.

\subsection{Checking Justification in ASP}
Intuitively, justification refers to a {\em structured explanation} for {\em why} a literal (atom or its negation) is \true or \false in a given answer set~\citep{PSE2009,FS2019}.
Recall that the classical definition of answer sets requires that each
\true atom in an interpretation, that also appears at the head of a
rule, must be justified~\citep{GL1988,Lifschitz2010}. More precisely, given an
interpretation $M$ s.t. $M \models P$, ASP solvers check whether
some of the atoms in $M$ can be set to \false, while satisfying the
reduct program $P^{M}$~\citep{Lierler2005}.  We use the notation
$\tau_{M}$ to denote the assignment of propositional variables
corresponding to the interpretation $M$.  Furthermore, we say that $x \in
\tau_M^+$ (resp. $\tau_M^-$) iff $\tau_M(x) = 1$ (resp. $0$).

While the existing literature typically formulates justification using {\em rule-based} or {\em graph-based} explanations~\citep{FS2019}, we propose a model-theoretic definition from the reduct $P^M$, for each interpretation $M \models P$.
An atom $x \in M$ is {\em justified} in $M$ if for every $M\textprime \models P^M$ such that $M\textprime \subseteq M$, it holds that $x \in M\textprime$.
In other words, removing $x$ from $M$ violates the satisfaction of $P^M$.
The definition is compatible with the standard characterization of answer sets, since $M$ is an answer set, when no $M\textprime \subsetneq M$ exists such that $M\textprime \models P^M$; i.e., each atom $x \in M$ is justified.  
Conversely, an atom $x \in M$ is {\em not justified} in $M$ if there exists a proper subset $M\textprime \subset M$ such that $M\textprime \models P^M$ and $x \not\in M\textprime$.
This notion of justification also aligns with how SAT-based ASP solvers perform {\em minimality checks}~\citep{Lierler2005} --- such solvers encode $P^M$ as a set of implications (see definition of $P^M$ in Section~\ref{section:preliminaries})
 and check the satisfiability of the formula:
$P^M \wedge \bigwedge_{x \in \tau_M^{-}} \neg x \wedge \bigvee_{x \in \tau_M^{+}} \neg x$.

\begin{proposition}
    \label{prop:justification_all_atoms}
    For a program $P$ and each interpretation $M$ such that $M \models P$, if the formula $P^{M} \wedge \bigwedge_{x \in \tau_{M}^{-}} \neg{x} \wedge \bigvee_{x \in \tau_{M}^{+}} \neg{x}$ is satisfiable, then some atoms in $M$ are not justified.
\end{proposition}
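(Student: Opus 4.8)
The plan is to prove the statement directly by extracting an explicit witness interpretation from a satisfying assignment of the formula and showing that this witness exhibits an unjustified atom of $M$. Concretely, write $\phi := P^{M} \wedge \bigwedge_{x \in \tau_{M}^{-}} \neg{x} \wedge \bigvee_{x \in \tau_{M}^{+}} \neg{x}$ and suppose $\phi$ has a model $\sigma$. Every atom occurring in $\phi$ lies in $\at{P}$ (the reduct $P^{M}$ is built over $\at{P}$, and the added literals range over $M$ and $\at{P}\setminus M$), so $\sigma$ induces an interpretation $M' = \{x \in \at{P} : \sigma(x) = 1\}$, which will be the desired proper subset of $M$.

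The argument then proceeds in three short steps. First, since $\sigma$ satisfies each conjunct $\neg x$ for $x \in \tau_{M}^{-} = \at{P}\setminus M$, no atom outside $M$ belongs to $M'$, so $M' \subseteq M$; and since $\sigma$ satisfies the disjunction $\bigvee_{x \in \tau_{M}^{+}} \neg x$, there is an atom $x^{\star} \in M$ with $\sigma(x^{\star}) = 0$, i.e. $x^{\star} \in M \setminus M'$, so $M' \subsetneq M$. (If $M = \emptyset$ the disjunction is empty, hence $\phi$ is unsatisfiable and the statement holds vacuously; so assume $M \neq \emptyset$.) Second, $M' \models P^{M}$: each reduct rule has the form $\head{r} \leftarrow \body{r}^{+}$ with empty negative body, and by the interpretation-satisfaction convention of Section~\ref{section:preliminaries} an interpretation $N$ satisfies it exactly when $\head{r}\cap N \neq \emptyset$ or $\body{r}^{+}\setminus N \neq \emptyset$, which is precisely the truth under $\tau_{N}$ of the propositional implication $\bigwedge_{\ell \in \body{r}^{+}} \ell \longrightarrow \bigvee_{a \in \head{r}} a$; since $\sigma = \tau_{M'}$ satisfies the $P^{M}$-conjuncts of $\phi$ (which are exactly these implications), $M' \models P^{M}$. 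Third, we have produced a proper subset $M' \subsetneq M$ with $M' \models P^{M}$ and an atom $x^{\star} \in M \setminus M'$; by the model-theoretic definition of justification, $x^{\star}$ is not justified in $M$, so some atom of $M$ is not justified, as required.

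The main (and essentially the only) point requiring care is \textbf{Step~2}, the translation between a \emph{model of $P^{M}$ viewed as a conjunction of propositional implications} and an \emph{interpretation satisfying the reduct program $P^{M}$}, including the degenerate conventions (an empty head yields consequent \false, an empty positive body yields antecedent \true). Once these conventions from Section~\ref{section:preliminaries} are invoked, the equivalence is immediate and the rest of the proof is bookkeeping. Note also that the hypothesis $M \models P$ is not actually needed for this direction; it is retained only for uniformity with the surrounding development.
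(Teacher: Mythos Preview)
Your proof is correct and is essentially a detailed unpacking of the paper's own argument, which simply states that ``the proposition holds by definition.'' You have spelled out precisely why: extracting $M'$ from a satisfying assignment of the formula yields a proper subset of $M$ satisfying $P^{M}$, which is exactly the definitional witness for some atom of $M$ being unjustified; your observation that $M \models P$ is not actually needed here is also accurate.
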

The proposition holds by definition. In the above formula, the term $\bigwedge_{x \in \tau_{M}^{-}}
\neg{x}$ encodes the fact that variables assigned \false in $M$ need
no justification.  On the other hand, the term $\bigvee_{x \in
  \tau_{M}^{+}} \neg{x}$ verifies whether any of the variables
assigned \true in $M$ is not justified.

We now show that under the Clark completion of a program, or when
$\tau_{M} \models \completion{P}$, then it suffices to check
justification of only loop atoms of $P$ in the interpretation $M$.
Note that the ASP counter, sharpASP~\citep{KCM2024}, also checks
justifications for loop atoms in the context of normal logic programs.
Our contribution lies in proving the sufficiency of checking
justifications for loop atoms even in the context of disjunctive logic
programs -- a non-trivial generalization.  Specifically, we establish
that when $\tau_{M} \models \completion{P}$, if any atom in $M$ is not
justified, then there must also be some loop atoms in $M$ that is not
justified.  To verify justifications for only loop atoms, we check the
satisfiability of the formula: $P^{M} \wedge \bigwedge_{x \in
  \tau_{M}^{-}} \neg{x} \wedge \bigwedge_{x \in \tau_{M}^{+} \wedge x
  \not \in \loopatoms{P}} x \wedge \bigvee_{x \in \tau_{M}^{+} \wedge
  x \in \loopatoms{P}} \neg{x}$.

\begin{proposition}
    \label{prop:justification_loop_atoms}
    For each $M \subseteq \at{P}$ such that $M \models P$, if the formula 
    $P^{M} \wedge \bigwedge_{x \in \tau_{M}^{-}} \neg{x} \wedge \bigwedge_{x \in \tau_{M}^{+} \wedge x \not \in \loopatoms{P}} x \wedge \bigvee_{x \in \tau_{M}^{+} \wedge x \in \loopatoms{P}} \neg{x}$ 
    is satisfiable, then some of the loop atoms in $M$ are not justified
    .
\end{proposition}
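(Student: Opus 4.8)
The plan is to observe that, exactly as for Proposition~\ref{prop:justification_all_atoms}, this statement follows by directly unwinding the model-theoretic definition of justification; no cleverness about $\completion{P}$, loops, or the dependency graph is required here. The role of the extra conjunct $\bigwedge_{x \in \tau_M^{+} \wedge x \notin \loopatoms{P}} x$ is simply to force the witnessing subinterpretation to retain every \true non-loop atom, so that the only atom it is ``allowed'' to drop is a loop atom; this is the one point I would make explicit.

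Concretely, I would start from an arbitrary satisfying assignment $\tau'$ of the formula $P^{M} \wedge \bigwedge_{x \in \tau_{M}^{-}} \neg{x} \wedge \bigwedge_{x \in \tau_{M}^{+} \wedge x \notin \loopatoms{P}} x \wedge \bigvee_{x \in \tau_{M}^{+} \wedge x \in \loopatoms{P}} \neg{x}$ and set $M' = \tau'^{+}$. Three observations then suffice. First, every atom $x \notin M$ occurs as a negative unit clause $\neg x$ (the conjunct $\bigwedge_{x \in \tau_M^{-}} \neg x$), hence $\tau'(x) = 0$ and $M' \subseteq M$. Second, $\tau'$ satisfies the implications encoding $P^{M}$, so $M' \models P^{M}$. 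Third, the disjunct $\bigvee_{x \in \tau_M^{+} \wedge x \in \loopatoms{P}} \neg x$ forces some loop atom $x_0 \in M \cap \loopatoms{P}$ to have $\tau'(x_0) = 0$, i.e.\ $x_0 \in M \setminus M'$; in particular $M' \subsetneq M$.

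Combining these, $M'$ is a proper subset of $M$ with $M' \models P^{M}$ and $x_0 \notin M'$, which is precisely the definition of $x_0$ being \emph{not justified} in $M$; since $x_0 \in \loopatoms{P}$, this exhibits a loop atom of $M$ that is not justified, proving the claim. I do not expect a real obstacle: this is the ``easy'' direction and holds by construction, paralleling Proposition~\ref{prop:justification_all_atoms}. The only subtlety worth a sentence is that the added conjunct $\bigwedge_{x \in \tau_M^{+} \wedge x \notin \loopatoms{P}} x$ is harmless in this direction (we only assume the whole formula is satisfiable), and that it is exactly this conjunct that carries the technical weight in the converse statement --- the genuinely non-trivial result that, when $\tau_M \models \completion{P}$, an unjustified atom in $M$ forces this restricted formula to be satisfiable.
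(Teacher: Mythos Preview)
Your proposal is correct and follows essentially the same route as the paper: both unwind the model-theoretic definition of (non-)justification, taking a satisfying assignment of the formula as a witness $M'\subsetneq M$ with $M'\models P^{M}$ that omits some loop atom $x_0\in M\cap\loopatoms{P}$. Your write-up is in fact more explicit than the paper's (which argues the point in two informal sentences), and your remark that the real work lies in the converse direction, handled by Lemma~\ref{lemma:cyclic_atom_suffices}, is exactly right.
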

\begin{proof}

  Since $\tau_M \models \completion{P}$, it implies that $\tau_M \models P^{M}$.
  Thus the formula $P^{M} \wedge \bigwedge_{x \in \tau_{M}^{-}} \neg{x} \wedge \bigwedge_{x \in \tau_{M}^{+} \wedge x \not \in \loopatoms{P}} x$ is satisfiable.
  
  If $P^{M} \wedge \bigwedge_{x \in \tau_{M}^{-}} \neg{x} \wedge \bigwedge_{x \in \tau_{M}^{+} \wedge x \not \in \loopatoms{P}} x \wedge \bigvee_{x \in \tau_{M}^{+} \wedge x \in \loopatoms{P}} \neg{x}$ is satisfiable, 
  then there are some loop atoms from $\tau_{M}^{+} \cup \loopatoms{P}$ that can be set to \false, while satisfying the formula $P^{M} \wedge \bigwedge_{x \in \tau_{M}^{-}} \neg{x} \bigwedge_{x \in \tau_{M}^{+} \wedge x \not \in \loopatoms{P}} x$. 
  It indicates that some of the loop atoms of $M$ are not justified; otherwise, each loop atom of $M$ is justified.
\end{proof}
In this above formula, the term $\bigwedge_{x \in \tau_{M}^{+} \wedge x \not
  \in \loopatoms{P}} x$ ensures that we are not concerned with
justifications for non-loop atoms.  On the other hand, the term
$\bigvee_{x \in \tau_{M}^{+} \wedge x \in \loopatoms{P}} \neg{x}$
specifically verifies whether any of the loop atoms assigned to \true in $M$ is not justified.

For every interpretation $M \models P$, checking justification of all loop atoms of $M$ suffices to check justification all atoms of $M$. 
The following lemma formalizes our claim: 
\begin{restatable}{lemma}{cyclicatomsuffices}
    \label{lemma:cyclic_atom_suffices}
    For a given program $P$ and each interpretation $M \subseteq \at{P}$ such that $\tau_{M} \models \completion{P}$,
      if $P^{M} \wedge \bigwedge_{x \in \tau_{M}^{-}} \neg{x} \wedge \bigvee_{x \in \tau_{M}^{+}} \neg{x}$ is SAT
      then $P^{M} \wedge \bigwedge_{x \in \tau_{M}^{-}} \neg{x} \wedge \bigwedge_{x \in \tau_{M}^{+} \wedge x \not \in \loopatoms{P}} x \wedge \bigvee_{x \in \tau_{M}^{+} \wedge x \in \loopatoms{P}} \neg{x}$ is also SAT. 
    \end{restatable}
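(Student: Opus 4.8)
The plan is to show the contrapositive-free direction directly: assuming $P^{M} \wedge \bigwedge_{x \in \tau_{M}^{-}} \neg{x} \wedge \bigvee_{x \in \tau_{M}^{+}} \neg{x}$ is SAT, I want to produce a witness for the second (loop-atom-restricted) formula. The hypothesis gives a proper subset $M' \subsetneq M$ with $M' \models P^{M}$; equivalently, some atom of $M$ is not justified. I would first argue that among all such subsets $M'$, we may take one that is \emph{minimal} (by iterating the minimality check), so that $M' \models P^M$ and no proper subset of $M'$ models $P^M$ --- i.e., every atom of $M'$ is justified \emph{relative to $M'$}. The goal then becomes: show $M \setminus M'$ contains at least one loop atom, because that loop atom (being in $\tau_M^+$, in $\loopatoms{P}$, and absent from $M'$) makes $M'$ a model of the restricted formula.

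**The key step** is proving that $M \setminus M'$ must contain a loop atom. Suppose for contradiction that every atom in $M \setminus M'$ is a non-loop atom. Pick an atom $a \in M \setminus M'$. Since $\tau_M \models \completion{P}$, the group~\ref{clause:g3} implication for $a$ holds, so $a$ is supported in $M$ by some rule $r$ with $a \in \head{r}$, $\body{r}^{+} \subseteq M$, $\body{r}^{-} \cap M = \emptyset$, and $(\head{r} \setminus \{a\}) \cap M = \emptyset$. Because $M' \models P^M$ and $r$ survives into $P^M$ (its negative body is disjoint from $M$), and $\head{r} \setminus \{a\}$ is already false in $M \supseteq M'$, the only way $M'$ can satisfy this reduct rule is if some atom of $\body{r}^{+}$ is \emph{also} missing from $M'$, i.e., $\body{r}^{+} \cap (M \setminus M') \neq \emptyset$. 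This gives a rule-edge in $\dependency{P}$ from $a$ (a head atom) to some $b \in \body{r}^{+}$ with $b \in M \setminus M'$. Iterating this construction from every atom of $M \setminus M'$ yields an infinite walk inside the finite set $M \setminus M'$ in $\dependency{P}$, which forces a cycle, hence a loop, all of whose atoms lie in $M \setminus M'$ --- contradicting the assumption that $M \setminus M'$ has no loop atoms. (Here I am using the standard fact that $\tau_M \models \completion{P}$ entails $\tau_M \models P^M$, already invoked in the proof of Proposition~\ref{prop:justification_loop_atoms}.)

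**The main obstacle** I anticipate is handling the disjunctive heads cleanly when extracting the supporting rule and the outgoing edge: in a normal program each surviving reduct rule has a single head atom, so the argument of the previous paragraph is immediate, but with $|\head{r}| \geq 2$ one must carefully check that the other head atoms of $r$ are false in $M$ (which follows from the group~\ref{clause:g3} clause, whose disjunct for $r$ conjoins $\bigwedge_{x \in \head{r_i}\setminus\{a\}} \neg x$) so that they cannot be used by $M'$ to satisfy the reduct rule instead of some positive body atom. A secondary subtlety is making sure the chosen $M'$ is indeed a \emph{proper} subset of $M$ and that we only ever ``remove'' atoms when walking the dependency graph, so that the walk genuinely stays within $M \setminus M'$; this is where minimality of $M'$ is used --- it guarantees that each atom we land on is genuinely outside $M'$ rather than merely a candidate for removal. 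Once the loop $L \subseteq M \setminus M'$ is found, its atoms are loop atoms of $P$ by definition, and since $L \subseteq M$, we have $L \subseteq \tau_M^{+}$, so the disjunct $\bigvee_{x \in \tau_M^{+} \wedge x \in \loopatoms{P}} \neg x$ is satisfied by $M'$; all the conjuncts $\bigwedge_{x \in \tau_M^{+} \wedge x \notin \loopatoms{P}} x$ hold because the non-loop atoms of $M$ stayed in $M'$ (they cannot have been removed, as removed atoms form a loop), and $M' \models P^M \wedge \bigwedge_{x \in \tau_M^{-}} \neg x$ by construction. Hence the restricted formula is SAT, completing the proof.
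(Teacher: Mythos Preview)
Your dependency-graph walk is exactly the mechanism the paper uses, and your handling of disjunctive heads via the group~\ref{clause:g3} Clark clause is correct. The gap is in your final step: having shown (by contradiction) that $M\setminus M'$ \emph{contains} a loop, you then assert that ``the non-loop atoms of $M$ stayed in $M'$ (they cannot have been removed, as removed atoms form a loop).'' But your contradiction only rules out the case where $M\setminus M'$ has \emph{no} loop atom; it does not show $M\setminus M'$ consists \emph{only} of loop atoms. Concretely, take $P=\{\,a\leftarrow b;\ b\leftarrow c;\ c\leftarrow b\,\}$, so that $\loopatoms{P}=\{b,c\}$ and $a$ is non-loop; then $M=\{a,b,c\}$ has $\tau_M\models\completion{P}$ and $M'=\emptyset\models P^M$, yet the non-loop atom $a$ lies in $M\setminus M'$. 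Your proposed witness $M'$ therefore fails the conjunct $\bigwedge_{x\in\tau_M^+,\,x\notin\loopatoms{P}} x$ of the target formula (a valid witness here is $\{a\}$, which you never construct). Minimality of $M'$ does not help: it constrains what lies \emph{inside} $M'$, not what was removed.

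The paper sidesteps this by arguing by contradiction at the level of the ``justified'' predicate rather than fixing a single $M'$: assuming the second formula is UNSAT, it takes every loop atom to be justified, and then the walk (over \emph{not-justified} atoms, rather than over $M\setminus M'$ for one fixed $M'$) produces an unjustified loop atom, giving the contradiction directly---so no explicit witness satisfying the non-loop conjuncts is ever needed. If you wish to salvage your direct-witness route, one clean fix is to let $S$ be a \emph{sink} strongly connected component of the subgraph of $\dependency{P}$ induced on $M\setminus M'$; your walk argument shows every such sink has an internal edge and hence $S\subseteq\loopatoms{P}$, and the sink property is precisely what makes $N:=M\setminus S$ satisfy $P^M$ while containing every non-loop atom of $M$.
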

\noindent \textit{The proof and illustrative examples are deferred to the Appendix.} 

\subsection{$\copyop{P}$ for Disjunctive Logic Programs}

Towards establishing an alternative definition of answer sets for
disjunctive logic programs, we now generalize the copy operation used
in~\citep{KCM2024} in the context of normal logic programs. Given an
ASP program $P$, for each loop atom $x \in \loopatoms{P}$, we
introduce a fresh variable $\copyatom{x}$ such that $\copyatom{x}
\not\in \at{P}$.  We refer to $\copyatom{x}$ as the \emph{copy
variable} of $x$.
Similar to~\citep{KCM2024,Kabir2024}, the operator $\copyop{P}$ returns the
following set of implicitly conjoined implications.
\begin{enumerate}
\item \label{l1:type1} (type $1$) for each loop atom $x \in \loopatoms{P}$, the implication $\copyatom{x} \longrightarrow x$ is included in $\copyop{P}$.
\item \label{l1:type2} (type $2$) for each rule $r = a_1 \vee \ldots a_k \leftarrow b_1, \ldots, b_m, \textsf{not } c_1, \ldots, \textsf{not } c_n \in \rules{P}$ such that 
  $\{a_1, \ldots a_k\} \cap \loopatoms{P} \neq \emptyset$, 
  the implication $\funcname{b_1} \wedge \ldots \funcname{b_m} \wedge \neg{c_1} \wedge \ldots \neg{c_n} \longrightarrow \funcname{a_1} \vee \ldots \funcname{a_k}$ is included in $\copyop{P}$, where $\funcname{x}$ is a function defined as follows:
  $\funcname{x} = \begin{cases}
    \copyatom{x} & \text{if $x \in \loopatoms{P}$}\\
    x & \text{otherwise}\\
\end{cases}$
\item No other implication is included in $\copyop{P}$.
\end{enumerate}
Note that we do not introduce any type~\ref{l1:type2} implication for a rule $r$ if $\head{r} \cap \loopatoms{P} = \emptyset$.
In a type~\ref{l1:type2} implication, each loop atom in the head and each positive body atom is replaced by its corresponding copy variable.
As a special case, if the program $P$ is tight then $\copyop{P} = \emptyset$.

We now demonstrate an important relationship between $P^{M}$ and
$\up{\copyop{P}}{\tau_M}$, for a given interpretation $M$.
Specifically, we show that we can use $\up{\copyop{P}}{\tau_M}$,
instead of $P^M$, to check the justification of loop atoms in $M$.  While sharpASP also utilizes a similar idea for normal programs, the following lemma
(\Cref{lemma:copy_reduct_are_equal}) formalizes this important
relationship in the context of the more general class of
disjunctive logic programs.

\begin{restatable}{lemma}{copyreductareequal}
    \label{lemma:copy_reduct_are_equal}
    For a given program $P$ and each interpretation $M \subseteq \at{P}$ such that $\tau_{M} \models \completion{P}$, 
    \begin{enumerate}
      \item the formula $\up{\copyop{P}}{\tau_{M}} \wedge \bigvee_{x \in \tau_{M}^{+} \wedge x \in \loopatoms{P}} \neg{\copyatom{x}}$ is SAT if and only if $P^{M} \wedge \bigwedge_{x \in \tau_{M}^{-}} \neg{x} \wedge \bigwedge_{x \in \tau_{M}^{+} \wedge x \not \in \loopatoms{P}} x \wedge \bigvee_{x \in \tau_{M}^{+} \wedge x \in \loopatoms{P}} \neg{x}$ is SAT 
      \item the formula $\up{\copyop{P}}{\tau_{M}} \wedge \bigvee_{x \in \tau_{M}^{+} \wedge x \in \loopatoms{P}} \neg{\copyatom{x}}$ is SAT if and only if $P^{M} \wedge \bigwedge_{x \in \tau_{M}^{-}} \neg{x} \wedge \bigvee_{x \in \tau_{M}^{+}} \neg{x}$ is SAT
    \end{enumerate}
\end{restatable}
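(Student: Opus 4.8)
The plan is to prove part~(1) by showing that the models of each of its two formulas are in bijection with the \emph{same} family of sets, namely
$\mathcal{M} = \{\, M' \mid M\setminus\loopatoms{P} \subseteq M' \subsetneq M \text{ and } M' \models P^M \,\}$,
so that the two formulas are equisatisfiable, and then to obtain part~(2) from part~(1) together with \Cref{lemma:cyclic_atom_suffices}.

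Two preliminary facts are worth isolating first. (i) Since $\tau_M\models\completion{P}$, the group~\ref{clause:g2} clauses of $\completion{P}$ say exactly that $M\models r$ for every $r\in\rules{P}$, so $M\models P$. (ii) Every original atom occurring in $\copyop{P}$ belongs to $\at{P}$ and is hence assigned by $\tau_M$; since unit propagation preserves logical equivalence, $\up{\copyop{P}}{\tau_M}\equiv\copyop{P}|_{\tau_M}$, so a model of $\up{\copyop{P}}{\tau_M}$ amounts to an assignment $\rho$ to the copy variables $\{\copyatom{x}\mid x\in\loopatoms{P}\}$ with $\tau_M\cup\rho\models\copyop{P}$. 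Moreover, applying unit propagation of $\tau_M$ to the type~\ref{l1:type1} clause $\copyatom{x}\to x$ of an atom $x\in\loopatoms{P}\setminus M$ yields the unit clause $\neg\copyatom{x}$, so every such $\rho$ forces $\copyatom{x}=0$ for all $x\in\loopatoms{P}\setminus M$.

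For the right-hand formula of part~(1) the bijection with $\mathcal{M}$ is immediate: a model $\sigma$ is forced to assign $0$ to $\tau_M^-$ and $1$ to $\tau_M^+\setminus\loopatoms{P}$, so $\sigma\mapsto\sigma^+$ maps its models onto exactly those $M'$ with $M\setminus\loopatoms{P}\subseteq M'\subseteq M$ satisfying $P^M$, and the clause $\bigvee_{x\in\tau_M^+\cap\loopatoms{P}}\neg x$ enforces $M'\subsetneq M$. For the left-hand formula, I would send a model $\rho$ to $M' = (M\setminus\loopatoms{P})\cup\{x\in\loopatoms{P}\mid\rho(\copyatom{x})=1\}$; by fact~(ii) this lies between $M\setminus\loopatoms{P}$ and $M$, the clause $\bigvee_{x\in\tau_M^+\cap\loopatoms{P}}\neg\copyatom{x}$ forces $M'\subsetneq M$, and the inverse map is $\rho(\copyatom{x})=1\iff x\in M'$. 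The crux is then to verify that under this identification $\tau_M\cup\rho\models\copyop{P}$ holds iff $M'\models P^M$. The key computation is that, because $M\setminus\loopatoms{P}\subseteq M'\subseteq M$, for every atom $y$ the literal $\funcname{y}$ evaluates to true under $\tau_M\cup\rho$ exactly when $y\in M'$ (for $y\in\loopatoms{P}$ this is the definition of $\rho$; for $y\notin\loopatoms{P}$ it follows from $y\in M\iff y\in M'$). Hence a type~\ref{l1:type2} clause for a rule $r$ with $\body{r}^-\cap M=\emptyset$ is satisfied iff $\body{r}^+\subseteq M'$ implies $\head{r}\cap M'\neq\emptyset$, i.e.\ iff $M'$ satisfies the reduct rule $\head{r}\leftarrow\body{r}^+$; a type~\ref{l1:type2} clause for a rule with $\body{r}^-\cap M\neq\emptyset$ is satisfied outright since some $c_j\in M$; and the type~\ref{l1:type1} clauses hold because $M'\subseteq M$.

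The one remaining gap --- and the step I expect to be the main obstacle --- is that $P^M$ carries a reduct rule for \emph{every} $r$ with $\body{r}^-\cap M=\emptyset$, whereas $\copyop{P}$ produces a type~\ref{l1:type2} clause only when additionally $\head{r}\cap\loopatoms{P}\neq\emptyset$. I would close this by arguing that such ``missing'' reduct rules are automatically satisfied by every $M'\in\mathcal{M}$: from $M\models r$ and $\body{r}^-\cap M=\emptyset$ we get $\head{r}\cap M\neq\emptyset$ or $\body{r}^+\not\subseteq M$; if $\body{r}^+\not\subseteq M$ then $\body{r}^+\not\subseteq M'$ since $M'\subseteq M$, so the rule holds vacuously; otherwise $\head{r}\cap\loopatoms{P}=\emptyset$ gives $\head{r}\cap M\subseteq M\setminus\loopatoms{P}\subseteq M'$, so $\head{r}\cap M'\neq\emptyset$. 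This yields $\tau_M\cup\rho\models\copyop{P}\iff M'\models P^M$ and completes part~(1). Finally, for part~(2) it suffices to observe that the right-hand formula of part~(1) and the right-hand formula of part~(2) are equisatisfiable: any model of the former also satisfies the latter because $\bigvee_{x\in\tau_M^+\cap\loopatoms{P}}\neg x$ entails $\bigvee_{x\in\tau_M^+}\neg x$, while the converse direction is precisely \Cref{lemma:cyclic_atom_suffices}; combined with part~(1), whose left-hand formula coincides with that of part~(2), this gives part~(2).
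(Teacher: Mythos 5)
Your proof is correct, and it reaches the conclusion by a genuinely different route than the paper. The paper's proof of part~(1) is syntactic: it tracks what survives unit propagation of $\tau_M$ in $P^{M} \wedge \bigwedge_{x \in \tau_{M}^{-}} \neg{x} \wedge \bigwedge_{x \in \tau_{M}^{+} \wedge x \not\in \loopatoms{P}} x$ on the one hand and in $\copyop{P}$ on the other, and argues that the two residual sets of implications are identical up to renaming each loop atom $x$ to $\copyatom{x}$, so the two formulas are satisfiable together. Your proof is semantic: you exhibit an explicit bijection between the models of each formula and the single family $\mathcal{M}$ of proper subsets $M'\subsetneq M$ that contain $M\setminus\loopatoms{P}$ and satisfy $P^M$, with the translation $\rho(\copyatom{x})=1 \iff x\in M'$ doing the work. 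The two arguments capture the same underlying fact, but yours has a concrete advantage: you explicitly isolate and discharge the asymmetry that $P^M$ contains a reduct rule for \emph{every} rule with $\body{r}^{-}\cap M=\emptyset$, whereas $\copyop{P}$ contributes a type~\ref{l1:type2} implication only when $\head{r}\cap\loopatoms{P}\neq\emptyset$; your observation that the "missing" reduct rules are automatically satisfied by every $M'\in\mathcal{M}$ (via $M\models r$ and $M\setminus\loopatoms{P}\subseteq M'\subseteq M$) is exactly the point the paper's "the implications are identical" claim leaves implicit (there it is absorbed into the unit-propagation fixed point, where such clauses are deleted because $\tau_M\models\completion{P}$). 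Your derivation of part~(2) --- one inclusion of disjunctions plus \Cref{lemma:cyclic_atom_suffices} for the converse --- coincides with the paper's. The only cosmetic caveat is that your identification of models of $\up{\copyop{P}}{\tau_M}$ with extensions $\rho$ satisfying $\tau_M\cup\rho\models\copyop{P}$ tacitly uses that unit propagation preserves the set of models over the unassigned variables, which is standard and which the paper also relies on.
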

\noindent \textit{The proof and illustrative examples are deferred to the Appendix.} 

We now integrate Clark's completion, the copy operation introduced above, and the core idea from \Cref{lemma:copy_reduct_are_equal} to propose an alternative definition of answer sets.
\begin{lemma}
    \label{prop:definition}
    For a given program $P$ and each interpretation $M \subseteq \at{P}$ such that $\tau_{M} \models \completion{P}$,
    $M \in \answer{P}$ if and only if the formula $\up{\copyop{P}}{\tau_{M}} \wedge \bigvee_{x \in \tau_{M}^{+} \wedge x \in \loopatoms{P}} \neg{\copyatom{x}}$ is UNSAT.
\end{lemma}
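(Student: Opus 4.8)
The plan is to chain together the results already established. The statement to prove has the form: given $\tau_M \models \completion{P}$, we have $M \in \answer{P}$ iff $\up{\copyop{P}}{\tau_M} \wedge \bigvee_{x \in \tau_M^+ \wedge x \in \loopatoms{P}} \neg{\copyatom{x}}$ is UNSAT. First I would recall the classical characterization: since $\tau_M \models \completion{P}$ already guarantees $M \models P$ (the group~\ref{clause:g2} clauses are exactly the rule implications), $M$ is an answer set precisely when no $M\textprime \subsetneq M$ satisfies $P^M$, i.e. every atom of $M$ is justified. Equivalently, $M \notin \answer{P}$ iff some atom in $M$ is not justified, which by Proposition~\ref{prop:justification_all_atoms} (and its converse, which holds by definition) is iff the formula $P^{M} \wedge \bigwedge_{x \in \tau_{M}^{-}} \neg{x} \wedge \bigvee_{x \in \tau_{M}^{+}} \neg{x}$ is SAT.

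Next I would invoke the two lemmas that do the real work. By \Cref{lemma:cyclic_atom_suffices}, the satisfiability of $P^{M} \wedge \bigwedge_{x \in \tau_{M}^{-}} \neg{x} \wedge \bigvee_{x \in \tau_{M}^{+}} \neg{x}$ implies the satisfiability of the loop-restricted formula $P^{M} \wedge \bigwedge_{x \in \tau_{M}^{-}} \neg{x} \wedge \bigwedge_{x \in \tau_{M}^{+} \wedge x \not\in \loopatoms{P}} x \wedge \bigvee_{x \in \tau_{M}^{+} \wedge x \in \loopatoms{P}} \neg{x}$; the reverse implication is trivial since any model of the loop-restricted formula is also a model of the weaker one (dropping the conjuncts $x$ for non-loop atoms only enlarges the model set). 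Hence these two formulas are equisatisfiable. Then by part~(1) of \Cref{lemma:copy_reduct_are_equal}, the loop-restricted formula is SAT iff $\up{\copyop{P}}{\tau_{M}} \wedge \bigvee_{x \in \tau_{M}^{+} \wedge x \in \loopatoms{P}} \neg{\copyatom{x}}$ is SAT. (Alternatively, part~(2) of the same lemma directly connects $\up{\copyop{P}}{\tau_M} \wedge \bigvee \neg{\copyatom{x}}$ with the all-atoms formula, making \Cref{lemma:cyclic_atom_suffices} strictly redundant for this proof — I would use whichever gives the cleanest exposition, likely part~(2) alone.)

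Putting the equivalences together: $M \in \answer{P}$ iff the all-atoms formula is UNSAT iff (by \Cref{lemma:copy_reduct_are_equal}, part~(2)) $\up{\copyop{P}}{\tau_{M}} \wedge \bigvee_{x \in \tau_{M}^{+} \wedge x \in \loopatoms{P}} \neg{\copyatom{x}}$ is UNSAT, which is exactly the claim. I should also handle the degenerate case where $\tau_M^+ \cap \loopatoms{P} = \emptyset$ (e.g. tight programs): there the big disjunction is empty, hence \false, so the whole formula is UNSAT, and correspondingly every atom of $M$ is a non-loop atom and is justified by $\completion{P}$ alone, so $M$ is an answer set — the biconditional holds vacuously on both sides.

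The main obstacle is essentially nonexistent for this particular statement, since it is a corollary of the two deferred lemmas; the genuine difficulty has been pushed into \Cref{lemma:cyclic_atom_suffices} and \Cref{lemma:copy_reduct_are_equal}. If I had to flag one subtlety, it is making sure the converse direction of Proposition~\ref{prop:justification_all_atoms} is used correctly — the proposition as stated is only one direction, and I must note explicitly that "some atom not justified $\Rightarrow$ the formula is SAT" also holds immediately from the definition of justification (an unjustified atom $x$ witnesses a model $M\textprime$ of $P^M$ with $M\textprime \subseteq M$, $x \notin M\textprime$, and $\tau_{M\textprime}$ restricted appropriately satisfies the formula). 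With that observation in place, the proof is a short chain of iff's.
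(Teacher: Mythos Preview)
Your proposal is correct and takes essentially the same approach as the paper: the paper's proof is a one-line remark that the result follows directly from \Cref{lemma:copy_reduct_are_equal} (part~(2), as you correctly identify as the clean route) together with the GL-reduct definition of answer sets. You have simply spelled out the chain of equivalences, the converse of Proposition~\ref{prop:justification_all_atoms}, and the degenerate tight case in more detail than the paper does.
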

The proof follows directly from the correctness of Lemma~\ref{lemma:copy_reduct_are_equal}, and from the definition of answer sets based on the Gelfond-Lifschitz reduct $P^M$ (see~\Cref{section:preliminaries}).

Our alternative definition of answer sets, formalized
in~\Cref{prop:definition}, implies that the complexity of checking answer sets for disjunctive logic programs is in
\coNP.  In contrast, the definition in~\citep{KCM2024}, which applies
only to normal logic programs, allows answer set checking for this
restricted class of programs to be accomplished in polynomial time.
Note that the $\copyop{P}$ has similarities with formulas introduced in~\citep{FS2015,HK2023} for \coNP checks.

In the following section, we utilize the definition in
\Cref{prop:definition} to count of models of
$\completion{P}$ that are not answer sets of $P$. This approach allows
us to determine the number of answer sets of $P$ via subtractive
reduction.

\section{Answer Set Counting: \toolname}
\label{section:counting}
\begin{figure}
    \begin{center}
    \begin{tikzpicture}  
    \node[block, fill=orange] (a) {Compute $\first$};  
    \node[block,below=of a, fill=orange] (b) {Compute $\second$, $\project$};  
    \node[block,draw,fill=orange,align=center] at ([xshift=3cm,yshift=1.1cm]$(a)!1.0!(b)$)  (c) {\footnotesize $\#\first - \#\exists \project\second$};
    \draw[line] (a)-- (c);  
    \draw[line] (b)-- (c);
    \draw [line] ($(-2.5,-1.3cm)+(a)$)node[left]{$P$} -- (a);
    \draw [line] ($(-2.5,-1.3cm)+(a)$)node[left]{$P$} -- (b);
    \node[text width=3cm] at (0.6, 1.0) {Overcount};
    \node[text width=3cm] at (0.9,-3.2) {Surplus};
    \node[text width=3cm] at (3.6,-2.2) {Subtraction};
    \end{tikzpicture}  
    \end{center}
    \caption{The high-level architecture of \toolname~for a program $P$.}
    \label{figure:block_diagram}
\end{figure}
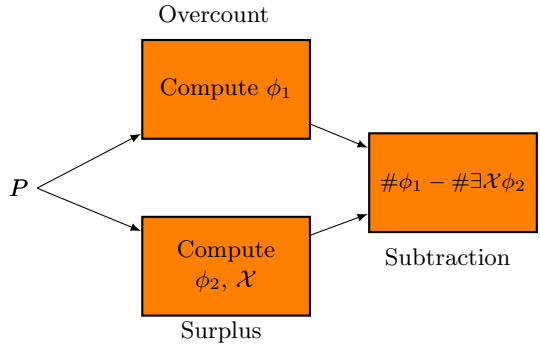

We now introduce a subtractive reduction-based technique for counting
the answer sets of disjunctive logic programs.  This approach reduces
answer set counting to projected model counting for propositional
formulas.  Note that projected model counting for propositional formulas is known to be in \#\npclass~\citep{ACMS2015a}; hence reducing
answer set counting (a \#~$\cdot$~co\npclass-complete problem)
to projected model counting makes sense\footnote{The classes
{\#\npclass} and {\#~$\cdot$~co\npclass} are known to coincide.}. In contrast,
answer set counting of normal logic programs is in \#P, and is
therefore easier.

At a high level, the proposed subtractive reduction
approach is illustrated in~\Cref{figure:block_diagram}.
For a given ASP program $P$, we overcount the answer sets of $P$
by considering the satisfying assignment of an appropriately
constructed propositional formula $\first$ (Overcount).  The
value $\#\first$ counts all answer sets of $P$, but also includes some
interpretations that are not answer sets of $P$.  To account for this
surplus, we introduce another Boolean formula $\second$
and a projection set $\project$ such that $\#\exists \project \second$
counts the surplus from the overcount of answer sets
(Surplus).
To correctly count the surplus, we employ the alternative answer set
definition outlined in~\Cref{prop:definition}.  Finally, the count of
answer sets of $P$ is determined by $\#\first - \#\exists\project\second$.

\paragraph{Counting Overcount}(\first)
Given a program $P$, the count of models of $\completion{P}$ provides
an overcount of the count of answer sets of $P$.  In the case
of tight programs, the count of answer sets is equivalent to the
count of models of $\completion{P}$~\citep{LL2003}.  However, for non
tight programs, the count of models of $\completion{P}$ 
overcounts $\Card{\answer{P}}$.  Therefore, we use 
\begin{align}
    \label{eq:exact_model_counting}
    \first = \completion{P}
\end{align}

\paragraph{Counting Surplus}(\second)
To count the surplus, we utilize the alternative answer set definition presented in~\Cref{prop:definition}. 
We use a propositional formula $\second$, in which for each loop atom $x$, there are two fresh copy variables: $\copyatom{x}$ and $\staratom{x}$. 
We introduce two sets of copy operations of $P$, namely, $\copyatom{\copyop{P}}$ and $\staratom{\copyop{P}}$, where for each loop atom $x$, the corresponding copy variables are denoted as $\copyatom{x}$ and $\staratom{x}$, respectively. 
We use the notations $\copyatom{\copyvar}$ and $\staratom{\copyvar}$ to refer to the copy variables of $\copyatom{\copyop{P}}$ and $\staratom{\copyop{P}}$, respectively;
i.e., $\copyatom{\copyvar} = \{\copyatom{x}|x \in \loopatoms{P}\}$ and $\staratom{\copyvar} = \{\staratom{x}|x \in \loopatoms{P}\}$.
To compute the surplus, we define the formula $\second(\at{P}, \copyatom{\copyvar}, \staratom{\copyvar})$ as follows:
\begin{align}
    \label{eq:projected_model_counting}
    \second(\at{P}, \copyatom{\copyvar}, &\staratom{\copyvar}) = \completion{P} \wedge \copyatom{\copyop{P}} \wedge \staratom{\copyop{P}} \nonumber \\&\wedge \bigwedge_{x \in \loopatoms{P}} (\copyatom{x} \longrightarrow \staratom{x}) \wedge \bigvee_{x \in \loopatoms{P}} (\neg{\copyatom{x}} \wedge \staratom{x})
\end{align}

\begin{restatable}{lemma}{counting}
  \label{lemma:counting}
  The number of models of $\completion{P}$ that are not answer sets of $P$
  can be computed as $\#\exists \copyatom{\copyvar}, \staratom{\copyvar}\text{ }\second(\at{P}, \copyatom{\copyvar}, \staratom{\copyvar})$, where the formula \second~is defined in~\Cref{eq:projected_model_counting}.
\end{restatable}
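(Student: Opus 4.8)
The plan is to establish a bijection between, on one hand, the models of $\completion{P}$ that are \emph{not} answer sets of $P$, and on the other hand, the equivalence classes of models of $\second(\at{P}, \copyatom{\copyvar}, \staratom{\copyvar})$ under the projection that forgets the assignments to $\copyatom{\copyvar} \cup \staratom{\copyvar}$. The candidate map sends an equivalence class of models of $\second$ to its restriction to $\at{P}$. I would argue that this restriction always gives a model of $\completion{P}$ (immediate, since $\completion{P}$ is a conjunct of $\second$), and conversely that a model $\tau_M \models \completion{P}$ lies in the image precisely when $M \notin \answer{P}$.

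The key step is to unwind what it means for a model $\tau_M$ of $\completion{P}$ to extend to a model of $\second$. Such an extension must pick values for $\copyatom{\copyvar}$ satisfying $\up{\copyatom{\copyop{P}}}{\tau_M}$ and values for $\staratom{\copyvar}$ satisfying $\up{\staratom{\copyop{P}}}{\tau_M}$, while also satisfying $\bigwedge_{x \in \loopatoms{P}} (\copyatom{x} \rightarrow \staratom{x})$ and the disjunction $\bigvee_{x \in \loopatoms{P}} (\neg{\copyatom{x}} \wedge \staratom{x})$. I would first note that setting every $\copyatom{x} = \tau_M(x)$ and every $\staratom{x} = \tau_M(x)$ satisfies the two copy blocks and the implication block (using $\tau_M \models \completion{P} \Rightarrow \tau_M \models P^M$, hence $\tau_M$ satisfies $\copyop{P}$, as in the proof of Lemma~\ref{lemma:copy_reduct_are_equal}). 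So the \emph{only} obstacle to extending $\tau_M$ to a model of $\second$ is the final disjunct: we need \emph{some} loop atom $x \in \tau_M^{+}$ for which the $\copyatom{}$-copy can be set \false\ while the $\staratom{}$-copy is kept \true, consistently with both copy blocks and the monotone constraint $\copyatom{x} \rightarrow \staratom{x}$. Since the $\staratom{}$-block can always be satisfied by $\staratom{x} = \tau_M(x)$ independently, and $\copyatom{x} \rightarrow \staratom{x}$ is then automatically satisfied whenever $\copyatom{x}$ is set \false\ on a \true\ atom, the existence of a satisfying extension reduces exactly to the satisfiability of $\up{\copyatom{\copyop{P}}}{\tau_M} \wedge \bigvee_{x \in \tau_M^{+} \wedge x \in \loopatoms{P}} \neg{\copyatom{x}}$. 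This is precisely the formula whose satisfiability, by \Cref{prop:definition}, characterizes $M \notin \answer{P}$.

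Having reduced to that equivalence, I would assemble the argument: $\tau_M \models \completion{P}$ extends to a model of $\second$ $\iff$ $\up{\copyop{P}}{\tau_M} \wedge \bigvee_{x \in \tau_M^{+}, x \in \loopatoms{P}} \neg{\copyatom{x}}$ is SAT $\iff$ (by \Cref{prop:definition}) $M \notin \answer{P}$. Therefore the projection of the model set of $\second$ onto $\at{P}$ is exactly $\{\tau_M : \tau_M \models \completion{P}, M \notin \answer{P}\}$, and since every model of $\completion{P}$ has its atoms in $\at{P}$, projecting away $\copyatom{\copyvar} \cup \staratom{\copyvar}$ counts each such $M$ exactly once. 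Hence $\#\exists \copyatom{\copyvar}, \staratom{\copyvar}\ \second = |\{M : \tau_M \models \completion{P}\} \setminus \answer{P}|$, which is the claim.

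The main obstacle I anticipate is the careful handling of the two copy blocks together with the mixed constraint $\copyatom{x} \rightarrow \staratom{x}$ and the disjunction: one must be sure that choosing the $\copyatom{}$-copies to witness non-justification of some loop atom does not conflict with simultaneously having to satisfy $\up{\staratom{\copyop{P}}}{\tau_M}$ and the cross-implications. The cleanest way to dispatch this is to observe that the $\staratom{}$-copies can be fixed canonically to $\tau_M$ on $\at{P}$ (more precisely $\staratom{x} := \tau_M(x)$ for $x \in \loopatoms{P}$), which trivially satisfies $\staratom{\copyop{P}}$ and makes every $\copyatom{x} \rightarrow \staratom{x}$ with $x \in \tau_M^{+}$ vacuous, decoupling the $\staratom{}$ part entirely and leaving only the $\copyatom{}$-side condition — which is exactly the SAT instance of \Cref{prop:definition}. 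A minor secondary point to check is that any model of $\second$ restricted to $\at{P}$ really does satisfy $\completion{P}$ and that distinct $M$ give distinct projected models, both of which are immediate from the construction.
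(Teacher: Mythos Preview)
Your proposal is correct and follows essentially the same route as the paper: both arguments reduce the claim to \Cref{prop:definition} by analyzing when an assignment $\tau_M \models \completion{P}$ extends to a model of $\second$. Your version is in fact more complete than the paper's, which only spells out the forward direction (every projected model of $\second$ is a non--answer-set model of $\completion{P}$); your explicit decoupling via the canonical choice $\staratom{x} := \tau_M(x)$ cleanly handles the reverse direction that the paper leaves implicit.
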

\begin{proof}
  From the definition of {\second} (\Cref{eq:projected_model_counting}), we know that for every model $\sigma \models \second$, the assignment to $\copyatom{\copyvar}$ and $\staratom{\copyvar}$ is such that  
  $\forall x \in \loopatoms{P}, \sigma(\copyatom{x}) \leq \sigma(\staratom{x})$ and $\exists x \in \loopatoms{P}, \sigma(\copyatom{x}) < \sigma(\staratom{x})$\footnote{We use $0 < 1$ for this discussion.}.
  Let $M$ be the corresponding interpretation over $\at{P}$ of the satisfying assignment $\sigma$.
  Since $\sigma \models \second$, $\tau_M \models \completion{P}$ and some of the copy variables $\copyatom{x} \in \copyatom{\copyvar}$ can be set to \false where $\sigma(x) = \true$, while after setting the copy variables $\copyatom{x}$ to \false, the formula $\copyop{P}_{|\tau_M}$ is still satisfied.
  According to~\Cref{prop:definition}, we can conclude that $M \not \in \answer{P}$. 
  As a result, $\#\exists \copyatom{\copyvar}, \staratom{\copyvar}\text{ }\second(\at{P}, \copyatom{\copyvar}, \staratom{\copyvar})$ counts all interpretations that are not answer sets of $P$.
\end{proof}

\begin{restatable}{theorem}{maintheorem}
  \label{theorem:counting}
    For a given program $P$, the number of answer sets: $\Card{\answer{P}} = \#\first - \#\exists\project\second$, where $\project = \copyatom{CV} \cup \staratom{CV}$, and \first~and \second~are defined in~\Cref{eq:exact_model_counting,eq:projected_model_counting}. Furthermore, both \first~ and \second~ can be computed in polynomial time in $\Card{P}$.
\end{restatable}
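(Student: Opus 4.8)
The plan is to derive the identity directly from \Cref{lemma:counting} together with the classical fact that every answer set of $P$ is a model of the Clark completion $\completion{P}$~\citep{LL2003}. First I would observe that, by definition, $\#\first = \#\completion{P}$ counts exactly the interpretations $M \subseteq \at{P}$ with $\tau_M \models \completion{P}$. Since $\answer{P} \subseteq \{M : \tau_M \models \completion{P}\}$, this collection of interpretations partitions into the disjoint sets $\answer{P}$ and $N := \{M : \tau_M \models \completion{P},\, M \notin \answer{P}\}$, so $\#\first = \Card{\answer{P}} + \Card{N}$. Next I would invoke \Cref{lemma:counting}, which gives $\Card{N} = \#\exists\,\copyatom{\copyvar},\staratom{\copyvar}\ \second$. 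As $\project = \copyatom{\copyvar} \cup \staratom{\copyvar}$ is precisely the set of copy variables of $\second$ and $\var{\second} = \at{P} \cup \project$, disregarding the variables of $\project$ is the same as counting the interpretations over $\at{P}$ that extend to some model of $\second$, i.e. $\#\exists\project\second = \Card{N}$. Combining, $\Card{\answer{P}} = \#\first - \Card{N} = \#\first - \#\exists\project\second$.

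Before concluding I would check that the construction really realizes a subtractive reduction in the sense of~\citep{DHK2005}, with $\#Q_1$ the answer-set counting problem (so $Q_1(P)$ is in bijection with $\answer{P}$), $f : P \mapsto \first$, and $g : P \mapsto \second$ (with projection scope $\project$): every model of $\second$ satisfies its conjunct $\completion{P}$, so the $\at{P}$-restrictions of the models of $\second$ form a subset of the models of $\first$, establishing $Q_2(g(P)) \subseteq Q_2(f(P))$; and the identity just derived is exactly $\Card{Q_1(P)} = \Card{Q_2(f(P))} - \Card{Q_2(g(P))}$. In particular the difference is always non-negative, as it must be.

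For the size and time claims I would bound each ingredient separately. The completion $\completion{P}$ comprises at most $\Card{\at{P}}$ group-1 unit clauses, one group-2 implication per rule, and one group-3 implication per head atom; each has size polynomial in $\Card{P}$, so $\first$ is constructible in polynomial time. The loop atoms $\loopatoms{P}$ are the atoms lying on a cycle of the positive dependency graph $\dependency{P}$, which has $\Card{\at{P}}$ vertices and at most $\sum_{r \in \rules{P}} \Card{\head{r}}\cdot\Card{\body{r}^+}$ edges; $\loopatoms{P}$ is thus computable in polynomial time by a strongly-connected-component decomposition. Each of $\copyatom{\copyop{P}}$ and $\staratom{\copyop{P}}$ contributes one type-1 implication per loop atom and one type-2 implication per rule whose head meets $\loopatoms{P}$, and the remaining conjuncts of $\second$ are $\Card{\loopatoms{P}}$ binary implications and one disjunction of $\Card{\loopatoms{P}}$ two-literal conjunctions. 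Summing up, $\second$ has size polynomial in $\Card{P}$ and is constructible in polynomial time; if a CNF form is wanted by an off-the-shelf projected counter, a standard structure-preserving clausification adds only $O(\Card{\loopatoms{P}})$ fresh variables, preserving polynomiality.

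The substantive content sits in \Cref{lemma:counting} and, behind it, \Cref{prop:definition} and \Cref{lemma:copy_reduct_are_equal}; given those, this theorem is essentially bookkeeping. The one step that needs care is aligning the projection scope with the set counted in \Cref{lemma:counting} --- confirming that distinct $M \in N$ yield distinct $\at{P}$-restricted models of $\second$, while distinct copy-variable extensions of a fixed $M$ are correctly collapsed by $\#\exists\project\second$ --- and verifying the containment $Q_2(g(P)) \subseteq Q_2(f(P))$; beyond that I expect no obstacle.
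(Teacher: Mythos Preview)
Your proposal is correct and follows essentially the same approach as the paper: invoke \Cref{lemma:counting} to identify $\#\exists\project\second$ with the count of Clark-completion models that are not answer sets, subtract from $\#\first = \#\completion{P}$, and then argue that $\completion{P}$, $\loopatoms{P}$, and $\copyop{P}$ are all polynomial-time constructible. Your write-up is more explicit than the paper's (e.g., the partition into $\answer{P}$ and $N$, the verification of $Q_2(g(P)) \subseteq Q_2(f(P))$, and the SCC-based computation of $\loopatoms{P}$), but the substance is the same.
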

\begin{proof}
  The proof of the part $\Card{\answer{P}} = \#\first - \#\exists\project\second$, where $\project = \copyatom{CV} \cup \staratom{CV}$, follows from~\Cref{lemma:counting}.
  Initially, $\#\first$ overcounts the number of answer sets, while the~\Cref{lemma:counting} establishes that $\#\exists\project\second$ counts the surplus from $\#\first$. Thus, the subtraction determines the number of answer sets of $P$. 
  
  For a program $P$, $\completion{P}$, $\copyop{P}$, and $\loopatoms{P}$ can be computed in time polynomial in the size of $P$. 
  Additionally, for a program $P$, $\Card{\loopatoms{P}} \leq \Card{\at{P}}$. 
  Thus, we can compute both $\first$ and $\second$ in polynomial time in the size of $P$.
\end{proof}

Now recall to subtractive reduction definition (ref.~\Cref{section:preliminaries}),
for a given ASP program $P$, $f(P)$ computes the
formula $\first$, and $g(P)$ computes the formula $\exists
\mathsf{CV}',\mathsf{CV}^*\second$.

We refer to the answer set counting technique based
on~\Cref{theorem:counting} as \toolname.  While \toolname shares
similarities with the answer set counting approach
outlined in~\citep{HK2023}, there are key differences between the two
techniques.  First, instead of counting the number of models of Clark
completion, the technique in~\citep{HK2023} counts {\em
  non-models} of the Clark completion.  Second, to count the surplus,
\toolname~introduces copy variables only for loop variables, whereas
the approach of~\citep{HK2023} introduces copy (referred to as {\em duplicate} variable) variables for
every variable in the program.  Third, \toolname~focuses on generating
a copy program over the cyclic components of the input program, while
their approach duplicates the entire program.  
A key distinction is that the size of Boolean formulas introduced by~\citep{HK2023} depends on the tree decomposition of the input
program and its treewidth, assuming that the treewidth is small.
However, most natural encodings that result in ASP programs are not
treewidth-aware~\citep{Hecher2022}. 
Importantly, their work focused on theoretical treatment and, as such, does not address algorithmic aspects. It is worth noting that there is no accompanying implementation. Our personal communication with authors confirmed that they have not yet implemented their proposed technique.

\section{Experimental Results}
\label{section:experiment}

We developed a prototype of \toolname\footnote{\url{https://github.com/meelgroup/SharpASP-SR}}, by leveraging existing 
projected model counters. Specifically, we employed \ganak~\citep{SRSM2019} 
as the underlying projected model counter, given its competitive 
performance in model counting competitions. The evaluation with 
All counters are sourced from the model counting competition $2024$.

\paragraph{Baseline and Benchmarks.}
We evaluated \toolname~against state-of-the-art ASP systems capable of 
handling disjunctive answer set programs: (i)~\clingo v$5.7.1$~\citep{GKS2012}, 
(ii)~DynASP v$2.0$~\citep{FHMW2017}, and (iii)~Wasp v$2$~\citep{ADLR2015}. ASP 
solvers \clingo~and Wasp count answer sets via enumeration. We were 
unable to baseline against existing ASP counters such as 
aspmc+\#SAT~\citep{EHK2024}, lp2sat+\#SAT~\citep{Janhunen2006,JN2011}, 
sharpASP~\citep{KCM2024}, and iascar~\citep{FGHR2024}, as these systems 
are designed exclusively for counting answer sets of normal logic 
programs. Since no implementation is available for the counting 
techniques outlined by~\citep{HK2023}, a comparison against their approach 
was not possible. We also considered ApproxASP~\citep{KESHFM2022} for 
comparison purposes, with results deferred to the appendix.

Our benchmark suite comprised non-tight disjunctive logic program 
instances previously used to evaluate disjunctive answer set solvers. 
These benchmarks span diverse computational problems, including: 
(i)~$2$QBF~\citep{KESHFM2022}, (ii)~strategic companies~\citep{Lierler2005}, 
(iii)~{\em preferred} extensions of {\em abstract argumentation}~\citep{GMRWW2015}, 
(iv)~pc configuration~\citep{FGR2022}, (v)~minimal 
diagnosis~\citep{GSTUV2008}, and (vi)~{\em minimal trap spaces}~\citep{TBPS2024}. The benchmarks were sourced from abstract argumentation 
competitions, ASP competitions~\citep{GMR2020} and from~\citep{KESHFM2022,TBPS2024}. 
Following recent work on disjunctive logic programs~\citep{AADLMR2019}, 
we generated additional non-tight disjunctive answer set programs using 
the generator implemented by~\citep{ART2017}. 
The complete benchmark set 
comprises $1125$ instances, available in supplementary materials.

\paragraph{Environmental Settings.}
All experiments were conducted on a computing cluster equipped with AMD 
EPYC $7713$ processors. Each benchmark instance was allocated one core, 
with runtime and memory limits set to $5000$ seconds and $8$ GB 
respectively for all tools, which is consistent with prior works on model counting and answer set counting.

\subsection{Experimental Results}

\begin{table}[h]
    \centering
    \begin{tabular}{m{7em} m{3em} m{4em} m{3em} m{6em}} 
    \toprule
    & \clingo & DynASP & Wasp & \toolname\\
    \midrule
    \#Solved {\small ($1125$)} & 708 & 89 & 432 & \textbf{825}\\
    \midrule
    PAR$2$ & 4118 & 9212 & 6204 & \textbf{2939}\\
    \bottomrule
    \end{tabular}
    \caption{The performance of \toolname~vis-a-vis existing disjunctive answer set counters, based on $1125$ instances.}
    \label{table:experimental_result}
    \vspace{-0.8em}
\end{table}

\begin{table}[h]
  \centering
  \begin{tabular}{m{7em} m{3em} m{4em} m{3em} m{6em}} 
  \toprule
  & & \multicolumn{3}{|c|}{\clingo ($\leq 10^4$) + }\\
  & \clingo & DynASP & Wasp & \toolname\\
  \midrule
  \#Solved {\small ($1125$)} & 708 & 377 & 442 & \textbf{918}\\
  \midrule
  PAR$2$ & 4118 & 4790 & 4404 & \textbf{1600}\\
  \bottomrule
  \end{tabular}
  \caption{The performance comparison of hybrid counters, based on $1125$ instances.
  The hybrid counters correspond to last $3$ columns that employ clingo enumeration followed by ASP counters.
  The clingo ($2$nd column) refers to clingo enumeration for $5000$ seconds.}
  \label{table:hybrid_counter_result}
  \vspace{-0.8em}
\end{table}

\toolname~demonstrated significant performance improvement across the benchmark suite, 
as evidenced in Table~\ref{table:experimental_result}. For comparative 
analysis, we present both the number of solved instances and PAR$2$ 
scores~\citep{SAT2017}, for each tool. \toolname~achieved the highest solution count 
while maintaining the lowest PAR$2$ score, indicating superior 
scalability compared to existing systems capable of counting answer 
sets of disjunctive logic programs. The comparative performance of different counters is shown in a cactus plot in~\Cref{fig:runtime_counters}.

Given \clingo's superior performance on instances with few answer sets, 
we developed a hybrid counter integrating the strengths of \clingo's enumeration 
and other counting techniques, following the experimental evaluation of~\citep{KCM2024}. This hybrid approach 
first employs \clingo~enumeration (maximum $10^4$ answer sets) and 
switches to alternative counting techniques if needed. 
Within our benchmark instances, a noticeable shift was observed on \clingo's runtime performance when the number of answer sets exceeds $10^4$. 
As shown in 
Table~\ref{table:hybrid_counter_result}, the hybrid counter based on 
\toolname~significantly outperforms baseline approaches.

\begin{figure}
  \centering
  \scalebox{0.8}{
    \includegraphics[width=0.9\linewidth]{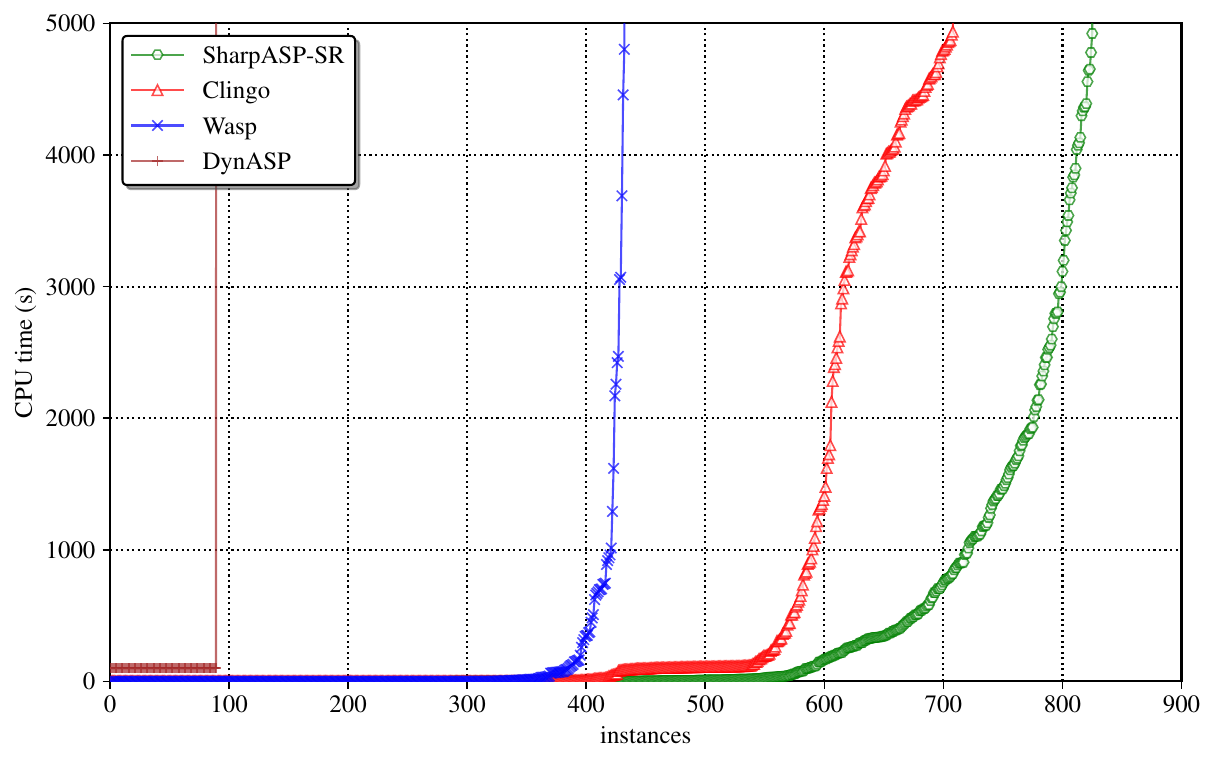}
  }
    \caption{The runtime performance of \toolname~vis-a-vis other ASP counters.}
    \label{fig:runtime_counters}
  \vspace{-0.8em}
\end{figure}
The cactus plot in Figure~\ref{fig:runtime_counters} illustrates the 
runtime performance of the four tools, where a point $(x,y)$ indicates 
that a tool can count $x$ instances within $y$ seconds. The plot 
shows \toolname's clear performance advantage over state-of-the-art 
answer set counters for disjunctive logic programs.

\begin{table}[h]
  \centering
  \begin{tabular}{m{4em} m{2em} m{4em} m{4em} m{4em} m{6em} m{10em}} 
  \toprule
  $\Card{\loopatoms{P}}$ & $\sum$ & \clingo & DynASP & Wasp & \toolname & \clingo+\toolname\\
  \midrule
  $[1,100]$ & 399 & 248 & 87 & 165 & \textbf{386} & 388\\
  \midrule
  $[101,1000]$ & 519 & 316 & 2 & 142 & \textbf{398} & 401\\
  \midrule
  $> 1000$ & 207 & \textbf{144} & 0 & 125 & 41 & 129\\
  \bottomrule
  \end{tabular}
  \caption{The performance comparison of \toolname~(SA) vis-a-vis existing disjunctive answer set counters across instances with varying numbers of loop atoms. 
  The second column ($\sum$) indicates the number of instances within each range of $\Card{\loopatoms{P}}$.}
  \label{table:higher_cyclic_result}
  \vspace{-0.8em}
\end{table}

Since \clingo~and Wasp employ enumeration-based techniques, their 
performance is inherently constrained by the answer set count. Our analysis 
revealed that \clingo~(resp. Wasp) timed out on nearly all instances 
with approximately $2^{30}$ (resp. $2^{24}$) or more answer sets, while \toolname~can count instances upto $2^{127}$ answer sets.
However, the performance of \toolname~is primarily influenced by the hardness of the projected model counting, which is related to the cyclicity of the program. 
The cyclicity of a program is quantified using the measure $\Card{\loopatoms{P}}$.

To analyze \toolname's performance relative to $\Card{\loopatoms{P}}$, 
we compared different ASP counters across varying ranges of loop atoms. 
The results in the Table~\ref{table:higher_cyclic_result} indicate that while 
\toolname~performs exceptionally well on instances with fewer loop 
atoms, its performance deteriorates significantly for instances with a higher number of loop atoms (e.g., those with $\Card{\loopatoms{P}} >1000$),  
leading to a decrease in the solved instances count.

\begin{figure}
  \centering
          \includegraphics[width=0.6\linewidth]{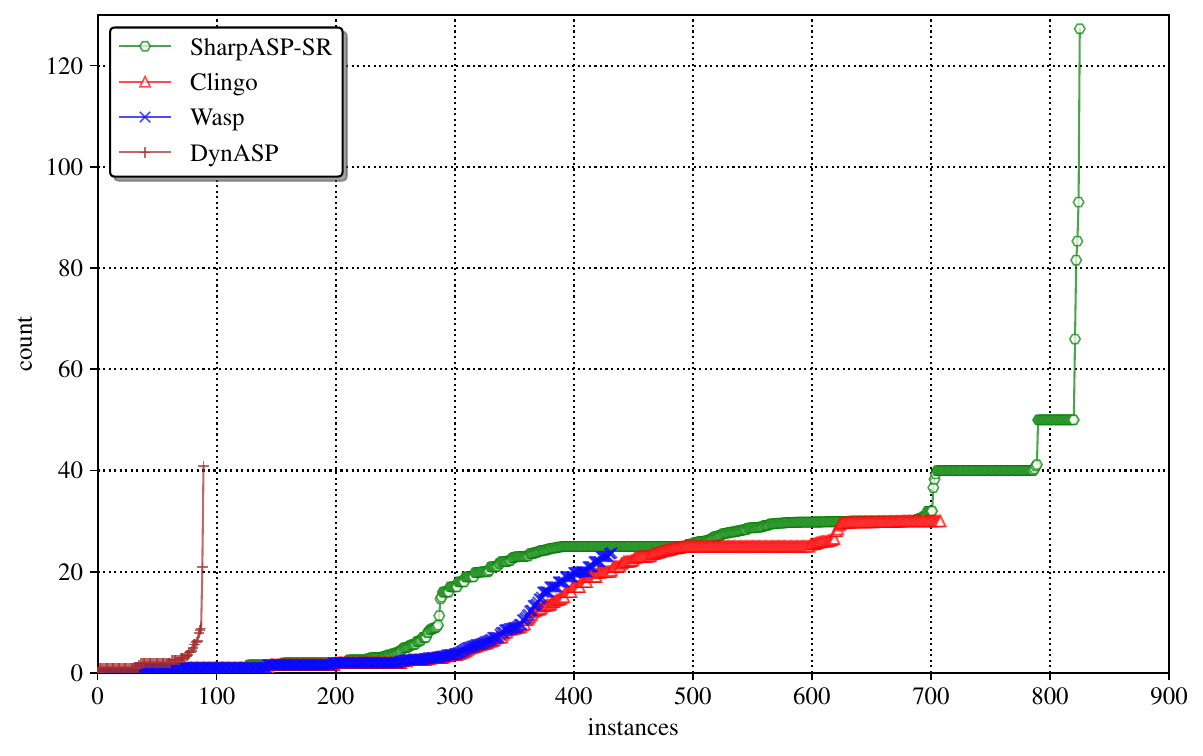}
      \caption{The performance comparison of \toolname~in terms of the number of answer sets. The $x$ and $y$-axis indicate the number of instances and answer set count (in $2$-base log scale), respectively.}
      \label{fig:count_counters}
\end{figure}
To further analyze the performance of \toolname, we compare the number of answer sets for each instance solved by different ASP counters.
We visually present these comparisons in~\Cref{fig:count_counters}.
In the plot, a point $(x,y)$ indicates that a tool can count $x$ instances, where each instance has up to $2^y$ answer sets.  
Since both \clingo and Wasp count using enumeration, \clingo and Wasp ~can handle instances with up to $2^{30}$ and $2^{24}$ answer sets (roughly) respectively, 
whereas \toolname~is capable of counting instances having $2^{127}$ answer sets. 
Due to its use of projected model counting, \toolname~demonstrates superior scalability on instances with a large number of answer sets.

\noindent {\em Further experimental evaluation of \toolname~is provided in the~\ref{section:detailed_experimental_analysis}.}

\section{Conclusion}
\label{section:conclusion}

In this paper, we introduced \toolname, a novel answer set counter based on subtractive reduction. By leveraging an alternative definition of answer sets, \toolname achieves significant performance improvements over baseline approaches, owing to its ability to rely on the scalability of state-of-the-art projected model counting techniques. Our experimental results demonstrate the effectiveness and efficiency of our approach across a range of benchmarks.

The use of subtractive reductions for empirical efficiency opens up potential avenues for future work. In particular, an interesting direction would be to categorize problems that can be reduced to \#SAT via subtractive methods, which would allow us to utilize existing \#SAT model counters.

\bibliography{tlpguide}

\appendix
\clearpage
\section*{Appendix}

\section{An Illustrative Example}
\paragraph{Example to illustrate Justification of atoms}
\addtocounter{example}{-1}
\begin{example}[continued]
    Consider the following two interpretations over $\at{P}$:
    \begin{itemize}
        \item $M_1 = \{p_0, w, q_0, q_1\}$: clearly, $\tau_{M_1} = \{p_0, w, q_0, q_1, \neg{p_1}\}$. As no proper subset of $M_1$ satisfies $P^{M_1}$, each atom of $M_1$ is justified. 
        \item $M_2 = \{p_1, w, q_0, q_1\}$: here, $\tau_{M_2} = \{p_1, w, q_0, q_1, \neg{p_0}\}$. Note that $\tau_{M_2} \models \completion{P}$. 
        The program $P^{M_2}$ includes all rules of $P$ except rule $r_7$. 
        There exists an interpretation $\{p_1, q_0\} \subset M_2$ such that $\{p_1, q_0\}$ satisfies $P^{M_2}$.  
        It indicates that atoms $q_1$ and $w$ are not justified in $M_2$. 
    \end{itemize}
\end{example}

\paragraph{Example to illustrate~\Cref{lemma:cyclic_atom_suffices}}
\addtocounter{example}{-1}
\begin{example}[continued]
    Consider the following two interpretations over $\at{P}$:
    \begin{itemize}
        \item $M_1 = \{p_0, w, q_0, q_1\}$: clearly, $\tau_{M_1} = \{p_0, w, q_0, q_1, \neg{p_1}\}$. Note that $M_1 \in \answer{P}$, as no strict subset of $M_1$ satisfies $P^{M_1}$.
        \item $M_2 = \{p_1, w, q_0, q_1\}$: here, $\tau_{M_2} = \{p_1, w, q_0, q_1, \neg{p_0}\}$. While $\tau_{M_2} \models \completion{P}$, it can be shown that $M_2 \not \in \answer{P}$. 
        The program $P^{M_2}$ includes all rules of $P$ except rule $r_7$. 
        There exists an interpretation $\{p_1, q_0\} \subset M_2$ such that $\{p_1, q_0\}$ satisfies $P^{M_2}$.  This means that the atoms $q_1$ and $w$ in $M_2$ are not justified. 
        Note that both $q_1$ and $w$ are loop atoms of program $P$. 
    \end{itemize}
\end{example}

\paragraph{Example to illustrate}$\copyop{P}$
\addtocounter{example}{-1}
\begin{example}[continued]
    For the given program $P$, we have $\loopatoms{P} = \{q_1, w\}$. 
    Therefore, $\copyop{P}$ introduces two fresh copy variables $\copyatom{q_1}$ and $\copyatom{w}$, and $\copyop{P}$ consists of 
    the following implications: $\{\copyatom{q_1} \longrightarrow q_1, \copyatom{w} \longrightarrow w, q_0 \vee \copyatom{q_1}, \copyatom{w} \longrightarrow \copyatom{q_1}, p_0 \longrightarrow \copyatom{w}, p_1 \wedge \copyatom{q_1} \longrightarrow \copyatom{w}\}$.
\end{example}

\paragraph{Example to illustrate~\Cref{prop:definition}}
\addtocounter{example}{-1}
\begin{example}[continued]
    Consider two interpretations $M_1, M_2 \subseteq \at{P}$:
    \begin{itemize}
        \item $M_1 = \{p_0, w, q_0, q_1\}$, where $\tau_{M_1} = \{p_0, w, q_0, q_1, \neg{p_1}\}$. Note that $M_1 \in \answer{P}$
        and we can verify that $\up{\copyop{P}}{\tau_{M_1}} \wedge (\neg{\copyatom{q_1}} \vee \neg{\copyatom{w}})$ is UNSAT.
        \item $M_2 = \{p_1, w, q_0, q_1\}$, where $\tau_{M_2} = \{p_1, w, q_0, q_1, \neg{p_0}\}$. Here, $M_2 \not\in \answer{P}$. While $\tau_{M_2} \models \completion{P}$, we can see that 
        $\up{\copyop{P}}{\tau_{M_2}} \wedge (\neg{\copyatom{q_1}} \vee \neg{\copyatom{w}}) = (\neg{\copyatom{w}} \vee \copyatom{q_1}) \wedge (\neg{\copyatom{q_1}} \vee \copyatom{w}) \wedge (\neg{\copyatom{w}} \vee \neg{\copyatom{q_1}})$ is SAT.
    \end{itemize}
\end{example}

\section{Proofs Deferred to the Appendix}

\cyclicatomsuffices*
\begin{proof}
    For notational clarity, let $A$ and $B$ denote the formulas $P^{M}
    \wedge \bigwedge_{x \in \tau_{M}^{-}} \neg{x} \wedge \bigvee_{x
      \in \tau_{M}^{+}} \neg{x}$ and $P^{M} \wedge \bigwedge_{x \in
      \tau_{M}^{-}} \neg{x} \wedge \bigwedge_{x \in \tau_{M}^{+}
      \wedge x \not \in \loopatoms{P}} x \wedge \bigvee_{x \in
      \tau_{M}^{+} \wedge x \in \loopatoms{P}} \neg{x}$, respectively.

    We use proof by contradiction. Suppose, if possible, $A$ is SAT
    but $B$ is UNSAT.

    Given that $A$ is SAT, we know that some atoms are not justified
    in $M$ (\Cref{prop:justification_all_atoms}).  Similarly, since
    $B$ is UNSAT, we know that all loop atoms are justified in $M$
    (\Cref{prop:justification_loop_atoms}).  Therefore, there must be
    a non-loop atom, say $x_1$, that is not justified in $M$.  Since
    $x_1 \in M$ and $\tau_{M} \models \completion{P}$, according to group $3$
    implications in the definition of Clark completion, there exists a
    rule $r_1 \in P$ such that $\body{r_1} \wedge \bigwedge_{x \in
      \head{r_1} \setminus \{x_1\}} \neg{x}$ is \true under
    $\tau_{M}$.
    It follows that there exists an atom $x_2 \in \body{r_1}^+$ that
    is not justified; otherwise, the atom $x_1$ would have no other
    option but be justified.  Now, we can repeat the same argument we
    presented above for $x_1$, but in the context of the non-justified
    atom $x_2$ in $M$.  By continuing this argument, we obtain a
    sequence of not justified atoms $\{x_1, x_2, \ldots\}$, such that
    the underlying set is a subset of $M$.  There are two possible
    cases to consider: either~(i) the sequence $\{x_i\}$ is unbounded,
    or (ii)~for some $i < j$, $x_i = x_j$. Case~(i)~contradicts the
    finiteness of $\at{P}$.  Case (ii) implies that some loop atoms are
    not justified -- a contradiction of our premise!

\end{proof}
\paragraph{Proof of~\Cref{lemma:copy_reduct_are_equal}}
\copyreductareequal*

\begin{proof}
    \textbf{The proof of ``part 1''}:\\
    Recall that $P^M$ can be thought of as a set of implications (see~\Cref{section:preliminaries}).
    First we discuss which implications are left in the subformula $P^M \wedge \bigwedge_{x \in \tau_{M}^{-}} \neg{x} \wedge \bigwedge_{x \in \tau_{M}^{+} \wedge x \not \in \loopatoms{P}} x$, following unit propagation.

    Note that the subformula $\bigwedge_{x \in \tau_{M}^{-}} \neg{x} \wedge \bigwedge_{x \in \tau_{M}^{+} \wedge x \not \in \loopatoms{P}} x$ includes a unit clause $\neg{x}$, 
    for each atom $x \in \tau_{M}^-$ and a unit clause $x$, each atom $x \in \tau_{M}^+ \cap \loopatoms{P}$.
    Since $\tau_M \models \completion{P}$, the unit propagation does not result in any empty clause. 
    When the unit propagation reaches the {\em fixed point}, all atoms that are assigned to \false ($x \in \tau_{M}^-$) and non loop atoms that are assigned to \true ($x \in \tau_{M}^{+} \wedge x \not \in \loopatoms{P}$) are removed from implications of $P^M$. 
    Thus, after reaching the fixed point, each implication is of the form: $\beta \longrightarrow \alpha$, 
    where $\beta$ (antecedent) is either \true or a conjunction of loop atoms, and $\alpha$ (consequent) is a disjunction of loop atoms, where those loop atoms are assigned to \true in $\tau_M$.
    Note that no implication left such that its consequent ($\alpha$) part is \false~but the antecedent ($\beta$) part is a conjunction of loop atoms; otherwise, we can say that $\tau_M \not \models \completion{P}$.
    In summary, after reaching the unit propagation fixed point, the formula $P^M \bigwedge_{x \in \tau_{M}^{-}} \neg{x} \wedge \bigwedge_{x \in \tau_{M}^{+} \wedge x \not \in \loopatoms{P}} x$ is left with a set of unit clauses and implications, where the implications consists of only loop atoms that are assigned to \true in $\tau_M$.

    Now we discuss how the implications of $\up{\copyop{P}}{\tau_M}$ relate to the implications of $P^M \wedge \bigwedge_{x \in \tau_{M}^{-}} \neg{x} \wedge \bigwedge_{x \in \tau_{M}^{+} \wedge x \not \in \loopatoms{P}} x$.
    According to the definition of the copy operation, $\copyop{P}$ introduces a type~\ref{l1:type2} implication for every rule whose head has at least one loop atom; 
    that is, $\copyop{P}$ introduces a type~\ref{l1:type2} implication for a rule $r$ if $\head{r} \cap \loopatoms{P} \neq \emptyset$.
    Recall that these type~\ref{l1:type2} implications are similar to the implications introduced and discussed above for $P^M$, except that  
    each of the loop atoms (i.e., $x$) of rule $r$ is replaced by its corresponding copy atom (i.e., $\copyatom{x}$). Clearly, no loop atom of $P$ presents in $\copyop{P}$.

    Note that the assignment $\tau_M$ does not assign any copy variables. 
    Since $\tau_M \models \completion{P}$, no empty clause is introduced in the unit propagation of $\up{\copyop{P}}{\tau_M}$. 
    If a loop atom $x \in \loopatoms{P}$ is \false in $\tau_M$, then the corresponding copy atom $\copyatom{x}$ will be unit propagated to \false in $\up{\copyop{P}}{\tau_M}$, due to the type~\ref{l1:type1} implication.
    Finally, no type~\ref{l1:type1} implication left in $\up{\copyop{P}}{\tau_M}$ after reaching the unit propagation fixed point.
    Straightforwardly, the formula $\up{\copyop{P}}{\tau_M}$ is the conjuction of a set of unit clauses and implications, where these implications consist of only copy atoms such that their corresponding loop atoms are assigned to \true in $\tau_M$. 
    
    From the discussion so far, we can say that the implications of $P^{M} \wedge \bigwedge_{x \in \tau_{M}^{-}} \neg{x} \wedge \bigwedge_{x \in \tau_{M}^{+} \wedge x \not \in \loopatoms{P}} x$ left after unit propagation are identical  to 
    the implications left from $\up{\copyop{P}}{\tau_M}$ after unit propagation, except that each loop atom in $P^M$ is replaced by their corresponding copy atom in $\up{\copyop{P}}{\tau_M}$. 
    Clearly, the satisfiability of both formulas $\up{\copyop{P}}{\tau_{M}} \wedge \bigvee_{x \in \tau_{M}^{+} \wedge x \in \loopatoms{P}} \neg{\copyatom{x}}$ and $P^{M} \wedge \bigwedge_{x \in \tau_{M}^{-}} \neg{x} \wedge \bigwedge_{x \in \tau_{M}^{+} \wedge x \not \in \loopatoms{P}} x \wedge \bigvee_{x \in \tau_{M}^{+} \wedge x \in \loopatoms{P}} \neg{x}$ depends only on these implications left so far; since the variables of unit clauses are different from the variables of those implications.
    From the relationship discussed above and following the~\Cref{prop:justification_loop_atoms}, the satisfiability checking of $\up{\copyop{P}}{\tau_{M}} \wedge \bigvee_{x \in \tau_{M}^{+} \wedge x \in \loopatoms{P}} \neg{\copyatom{x}}$ can be rephrased as to check justification of all loop atoms in the interpretation $\tau_M$.
    Finally we can say that the formula $\up{\copyop{P}}{\tau_{M}} \wedge \bigvee_{x \in \tau_{M}^{+} \wedge x \in \loopatoms{P}} \neg{\copyatom{x}}$ is SAT if and only if $P^{M} \wedge \bigwedge_{x \in \tau_{M}^{-}} \neg{x} \wedge \bigwedge_{x \in \tau_{M}^{+} \wedge x \not \in \loopatoms{P}} x \wedge \bigvee_{x \in \tau_{M}^{+} \wedge x \in \loopatoms{P}} \neg{x}$ is SAT (the satisfiability of both formulas have the same meaning).
    
    \noindent \textbf{The proof of ``part 2''}:\\
    We use the relationship established between the implications in $\up{\copyop{P}}{\tau_M}$ and implications in $P^M$ in \textbf{the proof of part 1}.
    
    \noindent \textbf{proof of ``if part''}: The proof of ``if part'' follows~\Cref{lemma:cyclic_atom_suffices} --- the~\Cref{lemma:cyclic_atom_suffices} proves that if some atoms of $M$ are not justified (or $P^{M} \wedge \bigwedge_{x \in \tau_{M}^{-}} \neg{x} \wedge \bigvee_{x \in \tau_{M}^{+}} \neg{x}$ is SAT), then some loop atoms of $M$ are not justified (or $\up{\copyop{P}}{\tau_{M}} \wedge \bigvee_{x \in \tau_{M}^{+} \wedge x \in \loopatoms{P}} \neg{\copyatom{x}}$ is SAT). 
    We have already shown that (in \textbf{part 1}) the satisfiability checking of $\up{\copyop{P}}{\tau_{M}} \wedge \bigvee_{x \in \tau_{M}^{+} \wedge x \in \loopatoms{P}} \neg{\copyatom{x}}$ can be rephrased as to check justification of all loop atoms in the interpretation $\tau_M$.
    So, the ``if part'' is proved.
    
    \noindent \textbf{proof of ``only if part''}: The proof is trivial. If some loop atoms of $M$ are not justified, then some atoms of $M$ are not justified. 
    So, the ``only if part'' is proved.
\end{proof}

\section{Further Experimental Analysis}
\label{section:detailed_experimental_analysis}

\paragraph{Performance comparison of different ASP counters across different computation problems.}
We present the~\Cref{table:performance_on_different_problems} showing the number of instances across different benchmark classes solved by different ASP counters.
We observe that there are two benchmark classes: preferred extension and diagnosis, where the performance of \toolname~is surpassed by \clingo. 
Our observations reveal that these instances tend to have a significantly larger number of loop atoms.
More specifically, around $66\%$ of Preferred extension instances and $100\%$ of Diagnosis instances contain more than $1000$ loop atoms.
\begin{table}[h]
    \centering
    \begin{tabular}{m{6em} m{3em} m{4em} m{4em} m{4em} m{3em} m{6em}} 
    \toprule
     & $\sum$ & $\sum^{\geq 1000}$ & clingo & DynASP & Wasp & \toolname\\
    \midrule
    $2$QBF & 200 & 0 & 179 & 0 & 58 & \textbf{181}\\
    \midrule
    Strategic & 226 & 0 & 53 & 0 & 0 & \textbf{125}\\
    \midrule
    Preferred & 217 & 142 & \textbf{208} & 2 & 192 & 110\\
    \midrule
    PC config & 1 & 1 & 0 & 0 & 0 & 0\\
    \midrule
    Diagnosis & 11 & 11 & \textbf{11} & 0 & 8 & 6\\
    \midrule
    Random & 226 & 0 & 80 & 0 & 0 & \textbf{213}\\
    \midrule
    MTS & 244 & 53 & 177 & 87 & 174 & \textbf{190}\\
    \midrule
    \midrule
    & & & 708 & 89 & 432 & 825\\
    \bottomrule
    \end{tabular}
    \caption{The performance comparison of different ASP counters across different benchmark classes. 
    The second ($\sum$) and third columns ($\sum^{\geq 1000}$) represents the total number of instances and total number of instances having more that $1000$ loop atoms in each benchmark class, respectively. 
    }
    \label{table:performance_on_different_problems}
\end{table}

\paragraph{Experimentals with Alternative Projected Model Counters ($\#\exists$SAT solvers) and ApproxASP.}
We conducted experiments with alternative projected model counters, including D$4$~\citep{LM2017}, GPMC~\citep{SHS2017}, as well as the approximate answer set counter ApproxASP~\citep{KESHFM2022}. 
The results of these experiments, comparing the performance of the alternative counting techniques, are summarized in~\Cref{table:alternative_counter_results}. 
The results reveal that \toolname~with \ganak~outperforms \toolname~with alternative projected model counters. 
\begin{table}[h]
    \centering
    \begin{tabular}{m{7em} m{3em} m{4em} m{5em}} 
    \toprule
    & D4 & GPMC & ApproxASP \\
    \midrule
    \#Solved {\small ($1125$)} & 697 & 759 & 715 \\
    \midrule
    PAR$2$ & 3856 & 3463 & 3829 \\
    \bottomrule
    \end{tabular}
    \caption{The performance comparison of alternative $\#\exists$SAT counting techniques and approximate counter ApproxASP.}
    \label{table:alternative_counter_results}
  \end{table}

\paragraph{Experimentals with $16$ GB Memory Limit}
We set the memory limit of $8$ GB, which is consistent with prior works on answer set counting~\citep{KCM2024,FGHR2024,EHK2021}.
Additionally, we conducted another set of experiments with a $16$ GB memory limit and the result is summarized in~\Cref{table:16gb_experimental_result}.
The findings indicate a slight increase in the number of solved instances for most ASP counters, with the exception of Wasp.
Specifically, the increase in solved instances was \clingo ($+10$), DynASP ($+8$), Wasp ($+96$), and \toolname~remained unchanged.
\begin{table}[h]
    \centering
    \begin{tabular}{m{7em} m{3em} m{3em} m{3em} m{6em}} 
    \toprule
    & \clingo & DynASP & Wasp & \toolname\\
    \midrule
    \#Solved {\small ($1125$)} & 718 & 97 & 528 & \textbf{825}\\
    \midrule
    PAR$2$ & 4025 & 9137 & 5552 & \textbf{2876}\\
    \bottomrule
    \end{tabular}
    \caption{The performance of \toolname~vis-a-vis existing disjunctive answer set counters, with a $16$ GB memory limit.}
    \label{table:16gb_experimental_result}
\end{table}

\end{document}